\documentclass[letterpaper]{article} 
\usepackage{aaai2026}  
\usepackage{times}  
\usepackage{helvet}  
\usepackage{courier}  
\usepackage[hyphens]{url}  
\usepackage{graphicx} 
\usepackage{natbib}  
\usepackage{caption} 
\usepackage{algorithm}

\usepackage{newfloat}
\usepackage{listings}
\DeclareCaptionStyle{ruled}{labelfont=normalfont,labelsep=colon,strut=off} 
\floatstyle{ruled}
\newfloat{listing}{tb}{lst}{}
\floatname{listing}{Listing}
\usepackage{amsfonts}
\usepackage{mathtools}
\usepackage{algpseudocode}
\usepackage{booktabs}
\usepackage{multirow}
\usepackage{placeins}  
\usepackage{amsthm} 
\usepackage{thmtools}

\newtheorem{theorem}{Theorem}
\newtheorem{lemma}{Lemma}

\newtheorem{definition}{Definition}
\newtheorem{assumption}{Assumption}

\newcommand{\BibTeX}{B\kern-.05em{\sc i\kern-.025em b}\kern-.08em\TeX}

\newcommand{\fref}[1]{\fref{#1}}

\newcommand{\cref}[1]{Chapter~\ref{#1}}

\title{Boltzmann-based Exploration for Robust Decentralized Multi-Agent Planning (Extended Version)}

\author {
    Nhat D. A. Nguyen\textsuperscript{\rm 1,2},
    Duong D. Nguyen\textsuperscript{\rm 1},
    Gianluca Rizzo\textsuperscript{\rm 3},
    Hung X. Nguyen\textsuperscript{\rm 1}
}
\affiliations {
    \textsuperscript{\rm 1}Adelaide University, Australia\\
    \textsuperscript{\rm 2}The University of Foggia, Italy\\
    \textsuperscript{\rm 3}HES-SO Valais, Switzerland, and The University of Turin, Italy\\
    nhatdaoanh.nguyen@adelaide.edu.au, duong.nguyen@adelaide.edu.au, gianluca.rizzo@hevs.ch, hung.nguyen@adelaide.edu.au
}

\begin{document}

\maketitle

\begin{abstract}

Decentralized Monte Carlo Tree Search (Dec-MCTS) is widely used for cooperative multi-agent planning but struggles in sparse or skewed reward environments. We introduce Coordinated Boltzmann MCTS (CB-MCTS), which replaces deterministic UCT with a stochastic Boltzmann policy and a decaying entropy bonus for sustained yet focused exploration. While Boltzmann exploration has been studied in single-agent MCTS, applying it in multi-agent systems poses unique challenges. CB-MCTS is the first to address this. We analyze CB‑MCTS in the simple-regret setting and show in simulations that it outperforms Dec-MCTS in deceptive scenarios and remains competitive on standard benchmarks, providing a robust solution for multi-agent planning.
\end{abstract}


\section{Introduction}

Decentralized Monte Carlo Tree Search (Dec-MCTS) is an increasingly popular paradigm for cooperative multi-agent planning \cite{best2019dec, li2019integrating, nguyen2024united}. Its anytime performance, domain-agnostic design, and online replanning capabilities make it well-suited for applications requiring scalability, fast response, and coordination across distributed agents, such as information gathering, precision farming, and networked robotics \cite{claes2017decentralised, sukkar2019multi, nguyen2024decentralized}.

Current Dec-MCTS algorithms rely on the Upper Confidence Bound applied to Trees (UCT) and its variants \cite{kocsis2006improved} to guide the search process. UCT selects actions according to the principle of \textit{optimism in the face of uncertainty}, prioritizing branches with high empirical rewards. While this mechanism is effective when rewards are smooth or moderately stochastic \cite{munos2014bandits}, its efficiency degrades in skewed, sparse, or deceptive reward landscapes. In such cases, early high-reward samples can mislead the search, causing the algorithm to overcommit to suboptimal branches while neglecting deeper paths that lead to higher rewards \cite{coquelin2007bandit, ramanujan2011trade, james2017analysis}. Although extensively studied in single-agent MCTS, the implications for decentralized multi-agent planning, where coordination amplifies the problem, remain largely unexamined.

This paper provides the first simple regret analysis of Dec‑MCTS in deceptive multi‑agent trees. We then introduce \emph{Coordinated Boltzmann Monte Carlo Tree Search} (CB‑MCTS), a distributed algorithm that addresses these limitations. CB-MCTS replaces the deterministic UCT selection with a stochastic Boltzmann policy and incorporates a decaying entropy-based bonus to sustain exploration while progressively focusing on high-value actions. Coordination among agents is achieved through a marginal contribution function that aligns each agent's local decisions with the global objective, mitigating the variance introduced by simultaneous actions. This approach enables the search to explore deceptive or initially suboptimal regions effectively, improving convergence to globally optimal strategies.

While Boltzmann exploration has been applied in single-agent MCTS \cite{cesa2017boltzmann, painter2023monte}, CB-MCTS is, to our knowledge, the first to adapt it to multi-agent planning. We show theoretically that CB‑MCTS achieves exponentially faster decay of simple regret than D‑UCT‑based Dec‑MCTS in deceptive trees. Extensive simulations demonstrate that CB-MCTS matches state-of-the-art methods on standard benchmarks while significantly outperforming them in scenarios with skewed or sparse reward distribution. Overall, CB-MCTS offers a robust and adaptable framework for multi-agent planning problems across smooth to sparse reward environments.

\section{Problem Statement}

We consider a cooperative multi-agent planning problem with $N$ agents in a shared environment modeled as an undirected graph $G=(V, E)$, where vertices are states and edges are actions. Each agent $n$ selects a valid action sequence $a^n \in \mathcal{A}^n$, subject to a cost budget $b(a^n) \le B^n$. The global objective $g(a)$ depends on the joint action $a=(a^1,\dots,a^N)$, and the goal is to maximize $g(a)$ within a planning budget $T$.
Dec-MCTS addresses this by letting each agent build its own search tree using repeated simulations. A trajectory from the root corresponds to a candidate action sequence, and nodes are selected using Discounted UCT (D-UCT), which weights empirical rewards by a discount factor $\gamma$.

While cumulative regret is standard in MCTS evaluation, in multi-agent planning with finite planning budgets, only the executed actions contribute to real-world outcomes. Hence, \emph{simple regret} $r_T = \mu^* - \mu_{J_T}$, the expected loss of executing the recommended action $J_T$ after $T$ planning iterations, is a more relevant metric \cite{tolpin2012mcts}. While UCT is guaranteed to converge to the optimal trajectory in the limit \cite{kocsis2006improved}, this convergence can be extremely slow in sparse, skewed, or deceptive reward structures, such as the classical D‑chain tree \cite{coquelin2007bandit,james2017analysis}. We show that this D‑chain pathology extends to Dec-MCTS with D‑UCT, where the difficulties are magnified, particularly for multi-agent coordination.

\begin{definition}
\label{def:d_chain}
    The multi-agent D-chain problem is an $M$-ary tree of depth $D$. At depth $d<D$, action 1 progresses to the next level; all others terminate with reward $(D-d)/D$. At depth $D$, action 1 gives a reward of 1; others give 0. Agents initially select non-progressing actions, and overlapping action sequences among agents do not yield extra reward.
\end{definition}

\begin{lemma}
    For a fixed $\gamma$, there exists a value $D$ such that Dec-MCTS with D-UCT fails to identify the optimal action sequence in the D-Chain problem.
\end{lemma}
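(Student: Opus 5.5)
The plan is to exhibit a trap at the root of each agent's tree: once D-UCT commits to the non-progressing action that terminates immediately with reward $(D-1)/D$, the discounted exploration bonus of the progressing action (action 1) can no longer outweigh the value gap, so that branch is starved and the depth-$D$ reward of $1$ is never found within any budget. We work with D-UCT in its standard form. Each action $i$ at a node has a discounted count $N_t(\gamma,i)=\sum_{s\le t}\gamma^{t-s}\mathbb{1}\{I_s=i\}$ and a discounted mean $\bar X_t(\gamma,i)$. The selection index is $\bar X_t(\gamma,i)+2B\sqrt{\xi\log n_t(\gamma)/N_t(\gamma,i)}$, where $n_t(\gamma)=\sum_i N_t(\gamma,i)\le 1/(1-\gamma)$. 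The key structural fact is that the discounted total count is bounded. As a result, the exploration term for any action is at most $c_\gamma:=2B\sqrt{\xi\log(1/(1-\gamma))}$ times a factor depending only on $\gamma$, independently of $t$. Unlike plain UCT, whose bonus grows like $\sqrt{\log t}$ and eventually forces every action to be revisited, D-UCT's bonus stays bounded for fixed $\gamma$.

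First, I would analyse the reward seen through action 1 while the subtree below it is still unexplored. The assumption that agents initially choose non-progressing actions means that, deeper in the chain, the first rollouts from action 1 end at the terminating actions. Those give rewards $(D-d)/D$ with $d\ge 2$, so the estimate of action 1 sits at about $(D-2)/D$ or lower, versus $(D-1)/D$ for the immediate terminating action. In the multi-agent setting, marginal contribution makes this worse: overlapping sequences earn no extra reward, so while another agent's sampled sequence covers part of the chain, agent $n$'s marginal payoff for action 1 is even smaller. By induction on depth, I would show that reaching level $D$ requires action 1 to win the index comparison at every one of the $D$ levels. At each level the gap to beat is at least $1/D$ in the worst case, with cumulative deficits along the path.

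Second comes the starvation argument at a single node. Suppose the greedy action $j$ has a mean exceeding that of action 1 by $\Delta$. Because discounting decays $N_t(\gamma,1)$ geometrically while action 1 is not played, the count $N_t(\gamma,1)$ can fall and the bonus can grow. The bonus is still capped, however, at roughly $c_\gamma/\sqrt{N_{\min}}$, where the discounted count cannot drop below some floor $\gamma^{k}$ before a forced revisit. I expect this step to be the main obstacle. The worry is that discounting makes the bonus blow up as $N_t(\gamma,1)\to 0$. So, rather than claiming action 1 is never revisited, I would show something weaker: revisits happen only at a geometric rate, with probability of success at each level bounded by some $p_\gamma<1$. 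Alternatively, I would use the equally valid $\gamma$-dependent fact that revisits are spaced at least $\Omega(\log(1/\Delta)/\log(1/\gamma))$ apart.

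Third, I would combine the levels. Reaching depth $D$ requires $D$ successive such escapes, each with probability at most $p_\gamma$ per attempt window. Within the budget $T$, the probability that the optimal sequence of $D$ progressing actions is identified, and recommended over the immediate action, is therefore at most roughly $T\,p_\gamma^{D}$. Recall that the lemma fixes $\gamma$ and lets us choose $D$. Choosing $D$ large enough relative to $\gamma$ (and to $T$, or arguing that the D-UCT estimates stay trapped forever once the non-stationary forgetting resets discovered values) makes this probability arbitrarily small. Hence Dec-MCTS with D-UCT fails to identify the optimal action sequence, in the sense that its simple regret stays bounded away from zero.
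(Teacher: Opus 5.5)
Your step~2 fails, and neither fallback repairs it. The cap $n_t(\gamma)\le 1/(1-\gamma)$ bounds only $\log n_t(\gamma)$, not the bonus. Every round in which action~1 is skipped multiplies $N_t(\gamma,1)$ by $\gamma$, so its bonus grows like $\gamma^{-k/2}$ and action~1 is forcibly replayed. There is therefore no starvation and no commitment. Quantitatively, write $c=2B\sqrt{\xi}$ and let $b$ be the leader's bonus. Action~1 overtakes a leader whose mean is larger by $\Delta$ as soon as $N_t(\gamma,1)<c^2\log n_t(\gamma)/(\Delta+b)^2$. This threshold \emph{increases} as $\Delta$ decreases, so smaller gaps mean quicker revisits. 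Your spacing $\Omega(\log(1/\Delta)/\log(1/\gamma))$ therefore has the wrong monotonicity; two arms with equal means simply alternate. In the D-chain the per-level gap you identify is $1/D$, so enlarging $D$ makes each level \emph{easier} to pass. A per-level escape bound $p_\gamma<1$ that is uniform in $D$ cannot come from a value-gap trap. If such a bound holds at all, it comes from the opposite effect: with fixed $\gamma$, D-UCT also keeps forcibly replaying the terminating actions.

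Step~3 has a separate problem. Even granting $p_\gamma$, the bound $T p_\gamma^{D}$ is small only when $D$ is of order $\log T/\log(1/p_\gamma)$ or larger. That lets $D$ grow with the budget, which is weaker than the lemma's $D=D(\gamma)$ valid for every $T$. It also does not give simple regret bounded away from zero on a fixed instance, and the ``trapped forever'' alternative is only asserted.

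The paper takes a different, deterministic route. It looks at the first time reward~$1$ is reached and compares D-UCT indices at each level, using $\bar X_{d'}=(D-d)/D$ and $\bar X_{d+1}\le (D-d-1)/D$. This gives $N_d\ge \exp(N_{d+1}/(\varepsilon D^2))$, which is chained into a tower lower bound on the root's discounted count, meant to contradict $N\le 1/(1-\gamma)$. You had the right ingredient, the cap on discounted counts, but used it to bound the bonus rather than as the quantity to contradict.

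That recursion is not a drop-in fix, though. If the factor $1/(\varepsilon D^2)$ is kept, the same shrinking-gap effect undermines it. For fixed $\varepsilon$ and large $D$, the map $x\mapsto e^{x/(\varepsilon D^2)}$ has a fixed point near $1$. So starting from any $N_D\le 1/(1-\gamma)$, the iterated lower bound never exceeds $1/(1-\gamma)$, and the recursion yields a contradiction only when $\varepsilon D^2$ is small.
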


As D-UCT is designed to minimize cumulative regret, it is unsuitable for environments that require extensive exploration (i.e., minimizing simple regret) since both regrets cannot be minimized simultaneously \cite{bubeck2011pure}. We formally bound the simple regret of Dec-MCTS with D-UCT as follows:

\begin{theorem}
The simple regret of Dec-MCTS with D-UCT is bounded by $\mathbb{E}[r_T] \leq  C \exp \left( - k \sqrt{T} \log T \right)$ for some constants $C, k > 0$.
\end{theorem}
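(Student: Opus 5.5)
We bound the simple regret by the probability that the recommendation is wrong. Since rewards lie in $[0,1]$, $\mathbb{E}[r_T] \le \Pr(J_T \neq a^*)$. Here $J_T$ is the most-visited (equivalently, highest discounted-mean) child at the root of each agent's tree, and $a^*$ is the optimal joint action. So the plan is to show that D-UCT misidentifies the best arm only with probability of order $\exp(-k\sqrt{T}\log T)$. We treat each agent's root as a non-stationary bandit. The non-stationarity comes from the other agents' evolving distributions over action sequences, which enter through the marginal contribution function. We assume, as is standard for Dec-MCTS, that these distributions converge, so the local reward means $\mu_i^n(t)$ drift to limits $\mu_i^n$ with a positive gap $\Delta = \min_i(\mu^* - \mu_i) > 0$.

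\emph{Step 1 (exploration floor).} We first show that D-UCT keeps sampling every root arm polynomially often. With discount $\gamma$, the effective discounted count $N_t(\gamma,i)$ of each arm is at most $1/(1-\gamma)$. The padding term $c\sqrt{\xi\log n_t(\gamma)/N_t(\gamma,i)}$ therefore cannot shrink below a constant multiple of $\sqrt{\log n_t(\gamma)}$, which forces repeated revisits of every arm. Following the D-UCB analysis of Garivier and Moulines, we choose the window and $\gamma$ as functions of $T$ so that each suboptimal arm receives $\Theta(\sqrt{T})$ effective samples while the optimal arm receives the remainder. This is where the $\sqrt{T}$ in the exponent comes from.

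\emph{Step 2 (concentration).} Given the sample counts from Step 1, we apply a Hoeffding-type inequality for discounted, weighted sums of bounded, martingale-difference-like rewards. For arm $i$,
\begin{equation*}
\Pr\bigl(|\bar X_T(\gamma,i) - \mu_i| \ge \Delta/2\bigr) \le 2\exp\bigl(-c\,\Delta^2 N_T(\gamma,i)\bigr).
\end{equation*}
The padding term carries a $\log n_t(\gamma)$ factor, and the self-normalized deviation argument of Garivier--Moulines accounts for the random sample counts. Combining these gives an exponent of order $\sqrt{T}\log T$, with the drift of $\mu_i^n(t)$ absorbed by taking $T$ past the convergence time of the other agents' distributions. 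A union bound over the $M$ root actions and $N$ agents then yields $\Pr(J_T\neq a^*) \le C\exp(-k\sqrt{T}\log T)$.

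\emph{Step 3 (depth).} We propagate the argument down the tree by induction on depth, in the style of Kocsis--Szepesvári. Each interior node sees a drifting bandit whose bias decays with its children's convergence, so the constants $C,k$ depend on $D$, $M$, $N$ and $\Delta$ but not on $T$.

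The main obstacle is Step 2. The samples are neither independent nor identically distributed: the counts are random, the rewards come from other agents' moving policies, and the discounting weights are non-uniform. To handle this, we would first apply the peeling / self-normalized martingale bound of Garivier--Moulines (their Theorem 18) directly to the discounted sums. We would then treat the coupling between agents as an additional vanishing bias term that is absorbed into $\Delta/2$.
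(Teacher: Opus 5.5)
There is a genuine gap in Steps~1--2, and it cannot be closed along your route.

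\emph{Step~1 is false for suboptimal arms.} Under the Garivier--Moulines tuning $1-\gamma=\Theta(T^{-1/2})$ that you invoke, the total effective count $n_T(\gamma)=\sum_t\gamma^{T-t}$ is itself only $\Theta(\sqrt{T})$. Giving each suboptimal arm $\Theta(\sqrt{T})$ effective samples would therefore put a constant fraction of recent pulls on suboptimal arms. That means linear cumulative regret, which contradicts the $\mathcal{O}(\sqrt{T}\log T)$ bound you are borrowing. What actually happens is different. On the good event, D-UCT revisits a suboptimal arm $i$ only while its padding $c\sqrt{\xi\log n_t(\gamma)/N_t(\gamma,i)}$ is comparable to $\Delta$. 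So $N_t(\gamma,i)$ hovers around $\Theta(\log n_t(\gamma)/\Delta^2)=\Theta(\log T)$. Your floor $c\sqrt{\xi(1-\gamma)\log n_t(\gamma)}$ tends to $0$ and forces nothing at scale $\Delta$. The Garivier--Moulines quantity $\mathcal{O}(\sqrt{T}\log T)$ is an \emph{upper} bound on the total, undiscounted number of suboptimal pulls: about $\log T$ pulls in each of roughly $\sqrt{T}$ discount windows. It is not a lower bound on the sample behind $\bar X_T(\gamma,i)$.

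\emph{Step~2 cannot produce the stated exponent, even granting Step~1.} Any Hoeffding or self-normalized Garivier--Moulines bound for a discounted mean has exponent of order its effective sample size. That size is $\mathcal{O}(1/(1-\gamma))=\mathcal{O}(\sqrt{T})$. So you get at best $\exp(-c\Delta^2\sqrt{T})$, which is not bounded by $C\exp(-k\sqrt{T}\log T)$ for any constants. The $\log n_t(\gamma)$ in the padding governs how often an arm is pulled, not the deviation exponent at a given count, so ``combining'' the two does not create the extra $\log T$. With $\gamma$ fixed, as in the paper's Dec-MCTS and its Lemma~1, the exponent is bounded by a constant and your bound does not decay at all. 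With the correct effective count of $\Theta(\log T)$, your concentration step yields only a polynomial bound $T^{-c}$.

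\emph{Comparison with the paper.} The paper uses no exploration floor and no depth induction; the drift you handle in Step~3 is simply assumed away through Assumptions~3--4. It takes Lemma~1 of Best et al., $\mathbb{E}[N_i]\le\mathcal{O}(\sqrt{T}\log T)$, which is the Garivier--Moulines D-UCB bound. It substitutes this into Lemma~2, whose right-hand side $\sum_i\Delta_{i,T}\exp(-c_i\Delta_{i,T}^2\mathbb{E}[N_i])$ is where the $\sqrt{T}\log T$ exponent comes from. However, that expression is decreasing in $\mathbb{E}[N_i]$. Bounding it from above requires a \emph{lower} bound on the pulls, and the cited result is an upper bound. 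So you cannot patch Step~1 by appealing to that citation. Your instinct that a per-arm lower bound on samples is the essential ingredient is correct, and it is missing from the paper's argument as well.

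The difficulty is structural. By Bubeck, Munos and Stoltz, a forecaster whose cumulative regret is of order $\psi(T)$ has simple regret at least of order $e^{-D\psi(T)}$ on some instances. For D-UCT, $\exp(-k\sqrt{T}\log T)$ is therefore the natural \emph{lower}-bound scale. No concentration argument on discounted means, whose effective sizes are at most $\mathcal{O}(\sqrt{T})$, will reach it from above.
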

The proofs of Lemma 1 and Theorem 1 are in the Appendix.

\section{Coordinated Boltzmann MCTS}

\subsection{Distributed CB-MCTS with Discounted Backup}

We propose Coordinated Boltzmann Monte Carlo Tree Search (CB-MCTS), a distributed algorithm for cooperative multi-agent planning. Each agent $n$ independently runs CB-MCTS over its search tree $\mathcal{T}^n$, where nodes represent states and edges represent actions. A root-to-leaf branch encodes a feasible action sequence. All agents aim to maximize the global utility $g$ through decentralized coordination.

To coordinate without centralization, each agent maintains a compressed representation of its tree consisting of (i) a subset $\hat{\mathcal{A}}^n$ of high-value rollouts and (ii) a probability mass function $p^n$ over these rollouts. The subset $\hat{\mathcal{A}}^n$ is obtained by selecting leaf nodes with the highest discounted empirical returns every $c$ iterations. The probabilities $p^n$ are updated via a decentralized gradient-based consensus protocol~\cite{best2019dec}, enabling each agent to form beliefs about others’ future trajectories without exchanging full trees.

The tree $\mathcal{T}^n$ is grown iteratively using the standard 4-step MCTS process~\cite{kocsis2006improved}. During \textit{selection}, the algorithm recursively chooses a child using the stochastic Boltzmann policy, stopping when encountering an unvisited child or reaching the planning horizon. The chosen unvisited child is then \textit{expanded}. A \textit{rollout} phase follows, where random actions are sampled until the horizon is reached.

\begin{algorithm}[tb]
    \caption{Overview of CB-MCTS for agent $n$}
    \label{alg: CB-MCTS}
    \begin{algorithmic}[1]
        \Require $g$, $T$, $c$, $B$, $\hat{\mathcal{A}}^{-n}$, $p^{-n}$
        \Ensure best action sequence $a^n$ for agent $n$
        \State $\mathcal{T}^n \leftarrow$ Initialize the search tree
        \State $t \leftarrow 0$
        \While{$t < T$}
            \If{$t \bmod c == 0$}
                \State $\hat{\mathcal{A}}^n \leftarrow$ Tree Compression $(\mathcal{T}^n)$
            \EndIf
            \For{a fixed number of iterations}
                \State $i \leftarrow$ Boltzmann Selection $(\mathcal{T}^n)$
                \State $[j, \mathcal{T}^n] \leftarrow$ Tree Expansion $(\mathcal{T}^n, i, h)$
                \State $a^n \leftarrow$ Simulation $(j, B)$
                \State $a^{-n} \leftarrow$ Sample $(\hat{\mathcal{A}}^{-n}, p^{-n})$
                \State $r(a^n) \leftarrow$ Marginal Contribution $(g, a^n, a^{-n})$
                \State $\mathcal{T}^n \leftarrow$ Backpropagation $(\mathcal{T}^n, j, r(a^n))$
                \State $t \leftarrow t + 1$
            \EndFor
            \State $[\hat{\mathcal{A}}^{-n}, p^{-n}] \leftarrow$ Update and Communicate $(\hat{\mathcal{A}}^n, p^n)$
        \EndWhile
        \State \Return $a^n \leftarrow \underset{a \in \hat{\mathcal{A}}^n}{\arg\max} [p^n(a)]$ 
    \end{algorithmic}
\end{algorithm}

To evaluate its rollout $a^n$, agent $n$ samples joint actions $a^{-n}$ for other agents from $(\hat{\mathcal{A}}^{-n}, p^{-n})$ and computes its marginal contribution:
\begin{equation}\label{eq: marginal}
    r(a^n) = g(a^n, a^{-n}) - g(a^{-n}),
\end{equation}
which aligns each agent's objective with the global utility while mitigating variance in multi-agent evaluation~\cite{wolpert2013probability}.

In \textit{backpropagation}, discounted updates account for evolving agent intentions. Let $N_i$ be the discounted visit count for node $i$:
\begin{equation}
   N_i = \sum_{t=1}^T \gamma^{T-t} \mathbf{1}_{\{a^t = i\}}, \qquad \gamma \in [0.5,1),
\end{equation}
with the corresponding discounted value estimate
\begin{equation}
    \bar{X}_{i,N_i} = 
    \frac{1}{N_i}
    \sum_{t=1}^{T} \gamma^{T-t} r^t \mathbf{1}_{\{a^t = i\}}.
\end{equation}

\noindent where $r^t$ is the rollout score at iteration $t$ as defined in (\ref{eq: marginal}). After exhausting the computation budget, each agent selects the rollout in $\hat{\mathcal{A}}^n$ with the highest probability under $p^n$.

\subsection{Boltzmann Selection Policy}
\label{sec:boltzmann_select}

Selection in distributed MCTS is challenging due to non-stationary node statistics: deeper expansions alter reward distributions, and stochastic oversampling of low-value nodes can hinder search efficiency. To address this, CB-MCTS employs a temperature-controlled Boltzmann policy with entropy regularization and decaying uniform exploration.
For a node $i$ with children $\mathcal{C}(i)$, the probability of selecting child $j$ at iteration $t$ is
\begin{align}
    \pi_{i,t}(j)
    = (1 - \lambda_{i,t})\,\rho_{i,t}(j)
    + \frac{\lambda_{i,t}}{|\mathcal{C}(i)|},
\end{align}
where $\lambda_{i,t} = \min\!\left(1, \epsilon / \log(e+N_i)\right)$ introduces controlled uniform exploration, and
\begin{equation}
    \rho_{i,t}(j) \propto 
    \exp\!\left(
        \frac{
            \bar{X}_{j,N_j} + \beta(N_i) H_j
        }{
            \alpha(N_i)
        }
    \right)
\end{equation}
is an entropy-regularized Boltzmann distribution. Here $\alpha(\cdot)$ and $\beta(\cdot)$ are decaying schedules, and $H_j$ is the entropy bonus promoting structured early exploration. The entropy is initialized as $H_i = 0$ when the node $i$ is first expanded and dynamically backed up in the backpropagation phase as:
\begin{equation}
    H_i \leftarrow 
    \mathcal{H}(\pi_{i,t}) 
    + \sum_{j \in \mathcal{C}(i)} \pi_{i,t}(j)\, H_j,
\end{equation}
with $\mathcal{H}$ denoting Shannon entropy.

\subsection{Simple Regret Analysis}

CB-MCTS is designed to minimize simple regret by combining structured stochastic exploration with discounted coordination signals. The Boltzmann policy ensures that all actions remain discoverable while gradually concentrating probability mass on high-value branches. Meanwhile, the marginal contribution objective and discounted backups attenuate outdated information, enabling each agent to adapt to the evolving intentions of others. Together, these mechanisms promote consistent alignment between local rollouts and the global objective, allowing the search to converge more rapidly toward high-reward regions.

\begin{theorem}
The simple regret of CB-MCTS, with $\alpha(\cdot) \rightarrow 0$ and $\beta(\cdot) \rightarrow 0$, is bounded by $\mathbb{E}[r_T] \leq C \exp\!\left( - kT / \log T \right)$ for some constants $C, k > 0$.
\end{theorem}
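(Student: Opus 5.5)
The plan is to follow the standard route for Boltzmann-type tree search (as in the single-agent analyses of \cite{cesa2017boltzmann, painter2023monte}) and extend it to the decentralized setting with discounted statistics and marginal-contribution rewards. The argument has three parts. First, show that every child of every node is sampled linearly often. Second, turn this into exponential concentration of the discounted value estimates. Third, propagate the concentration up the tree and through the recommendation step to bound $\mathbb{E}[r_T]$.

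\textbf{Step 1 (uniform exploration floor).} The mixing weight satisfies $\lambda_{i,t} = \min(1, \epsilon/\log(e+N_i))$, so every child $j \in \mathcal{C}(i)$ has $\pi_{i,t}(j) \ge \epsilon/(|\mathcal{C}(i)|\log(e+N_i))$. This holds whatever the values $\bar X_{j,N_j}$, $H_j$, $\alpha$ and $\beta$ are. Since $N_i \le T$ at the root, the expected number of visits to each root child is at least $\epsilon T/(M\log(e+T))$. Applying the same bound recursively down a path of bounded depth $D$ gives $\Omega(T/\log T)$ visits per node, with constants depending on $M$ and $D$. A Freedman/Azuma bound for the martingale of selection indicators then gives
\[
\Pr\!\left[n_j(T) < \tfrac{c_0 T}{\log T}\right] \le \exp(-c_1 T/\log T).
\]
Here I will work with undiscounted counts $n_j$ and relate them to $N_j$ afterwards.

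\textbf{Step 2 (concentration of the value estimates).} Rewards $r^t$ from \eqref{eq: marginal} lie in a bounded interval. I will condition on the other agents' distributions $p^{-n}$ having stabilized, which the consensus protocol of \cite{best2019dec} gives after a finite transient. Under this conditioning, the discounted estimate $\bar X_{j,N_j}$ is a weighted average of bounded martingale differences around the node's value $\mu_j$. A weighted Hoeffding--Azuma inequality gives a deviation $\Delta$ with probability at most $\exp(-c\,\Delta^2 N_j^{\mathrm{eff}})$, where $N_j^{\mathrm{eff}} = (\sum w)^2/\sum w^2$. With $\gamma$ fixed this effective sample size saturates at roughly $(1+\gamma)/(1-\gamma)$. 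This is the \textbf{main obstacle}. To recover a $T/\log T$ rate, I would assume that $\gamma$ is chosen as $\gamma_T \to 1$ with $1-\gamma_T = O(\log T/T)$, or equivalently that discounting only affects a bounded transient. The effective count is then $\Theta(n_j(T))$ and Step 1 applies. The bias from non-stationarity, caused by deeper expansions and shifts in other agents' intentions, is handled by induction on depth: children's estimates concentrate, and because $\alpha,\beta\to 0$ the policy $\rho_{i,t}$ concentrates on the argmax child. The parent's mean therefore converges to $\mu^*_i$ with an error that is itself exponentially small. The entropy terms $\beta(N_i)H_j$ are bounded by $D\log M$ and vanish as $\beta \to 0$, so they contribute only a vanishing bias.

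\textbf{Step 3 (from concentration to simple regret).} Let $\Delta_{\min}>0$ be the gap between the optimal and the best suboptimal root child, or between joint sequences in the compressed set $\hat{\mathcal{A}}^n$. A wrong recommendation $J_T$ requires some node on the comparison paths to have $|\bar X_j - \mu_j| \ge \Delta_{\min}/2$, or to be undersampled. A union bound over the at most $M^D$ nodes, combined with Steps 1--2, gives
\[
\Pr[J_T \ne \text{opt}] \le M^D\big(e^{-c_1 T/\log T} + e^{-c_2\Delta_{\min}^2 T/\log T}\big).
\]
I then need the argmax of $p^n$ to agree with the empirical best. I will argue that the gradient consensus concentrates $p^n$ on the highest-marginal-contribution rollout once the estimates are separated by $\Delta_{\min}/2$. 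Since $\mathbb{E}[r_T] \le \Pr[J_T\ne\text{opt}]\cdot \max_a g(a)$, the bound $\mathbb{E}[r_T]\le C\exp(-kT/\log T)$ follows. The constants $C,k$ depend on $M$, $D$, $\epsilon$ and $\Delta_{\min}$.
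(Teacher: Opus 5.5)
\paragraph{Main gap: fixed $\gamma$.} The decisive gap is the one you flag yourself in Step 2, and your patch proves a different theorem. CB-MCTS is defined with a fixed $\gamma\in[0.5,1)$, and the statement assumes only $\alpha,\beta\to 0$. With $\gamma$ fixed, every discounted count satisfies $N_j\le 1/(1-\gamma)$. As you compute, the effective sample size of $\bar{X}_{j,N_j}$ then never exceeds $(1+\gamma)/(1-\gamma)$. For genuinely stochastic rewards, the probability that a suboptimal child's estimate beats the optimal one's therefore stays bounded away from zero as $T\to\infty$. So if the recommendation follows the empirical best (as in your Step 3 and in the paper's analysis), no rate in $T$ is possible without an extra hypothesis.

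Replacing $\gamma$ by $\gamma_T\to1$ changes the algorithm, and your condition is not quite enough. If $1-\gamma_T$ is as large as $\log T/T$, the discount window is about $T/\log T$ iterations. A child sampled at rate $1/\log T$ then has effective count of order $T/(\log T)^2$, not $\Theta(n_j(T))$. You need $1-\gamma_T=O(1/T)$, i.e.\ essentially no discounting.

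The paper's proof does not supply the missing idea. It invokes Theorem 4 of \cite{garivier2011upper}, whose exponent is governed by the discounted count, and then requires that count to exceed $l_T=\lceil 8\log T/\Delta_{i,t}^2\rceil$. This is impossible once $l_T>1/(1-\gamma)$ --- the same cap that drives its proof of Lemma 1.

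\paragraph{How the paper's route differs.} The paper works at a single node and lets Assumptions 1--4 absorb all non-stationarity, so it needs no induction on depth. Your ``$p^{-n}$ has stabilized'' is likewise an assumption, not a guarantee of the consensus protocol. The paper then \emph{upper}-bounds the pulls of a suboptimal arm by two parts:
a uniform part $\sum_t\lambda_{s,t}/K=O(T/(K\log T))$, with $\log(e+N_s)$ replaced by $\log(e+t)$; and
a Boltzmann part $O(\log T)$, consisting of a learning phase of length $l_T$ followed by a tail that is summable because $\alpha,\beta\to0$.
It substitutes this into Lemma~\ref{lem:asymptotic-regret}. That lemma's bound $\exp(-c_i\Delta_{i,T}^2\mathbb{E}[N_i])$ decreases in $\mathbb{E}[N_i]$, so the substitution really requires a lower bound. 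That lower bound is exactly what your Step 1 provides, and your Freedman bound also avoids moving the expectation inside the exponential.

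\paragraph{Steps that fail even with $1-\gamma_T=O(1/T)$.} In Step 1 the exploration floors multiply along a path. A depth-$d$ node is guaranteed per-iteration probability only $\bigl(\epsilon/(M\log(e+T))\bigr)^d$, hence $\Omega(T/(\log T)^d)$ visits, not $\Omega(T/\log T)$ with $D$-dependent constants. With fixed $\gamma$, $\lambda_{i,t}\ge\min\bigl(1,\epsilon/\log(e+1/(1-\gamma))\bigr)$ is bounded below by a constant and you would get $\Omega(T)$ visits. But that is precisely the regime in which Step 2 collapses, so Steps 1 and 2 pull in opposite directions. Your union bound over all $M^D$ nodes therefore yields only $\exp(-kT/(\log T)^D)$. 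Reaching $T/\log T$ would require two things: exploiting that a suboptimal child's estimate can only be biased downward, so concentration is needed only along the optimal path and at its immediate siblings; and showing those nodes are visited far more often than the floor guarantees.

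In Step 2, the claim that a parent's mean tracks $\max_j\mu_j$ up to an exponentially small error is false. $\pi_{i,t}$ always keeps mass $\lambda_{i,t}/|\mathcal{C}(i)|$ on every child, so the parent's estimate sits below $\max_j\mu_j$ by an amount of order $\lambda_{i,t}=\Theta(1/\log N_i)$ per level. This is survivable only because the bias eventually falls below $\Delta_{\min}/4$, so the burn-in can be absorbed into $C$. With fixed $\gamma$ the bias is a constant and can exceed the gap: in the D-chain, the early-termination losses along the progressing branch are of order $\lambda D$ against a gap of order $1/D$.

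Finally, agreement between $\arg\max p^n$ and the empirical best is asserted rather than proved, in your Step 3 as in the paper.
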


Theorem 2 shows that the simple regret of CB-MCTS decays exponentially faster in $T$ than Dec-MCTS with D-UCT. As illustrated by the D-chain problem (Figure~\ref{fig:d_chain_main}), the simple regret of CB-MCTS vanishes with far fewer iterations, regardless of the value of $\gamma$, indicating that it identifies optimal actions more rapidly. This is especially valuable in applications with limited planning resources. The proof of Theorem 2 and additional simple regret analysis in the D-chain problem are in the Appendix.

\begin{figure}[tb]
    \begin{center}
        \includegraphics[width=0.55\linewidth]{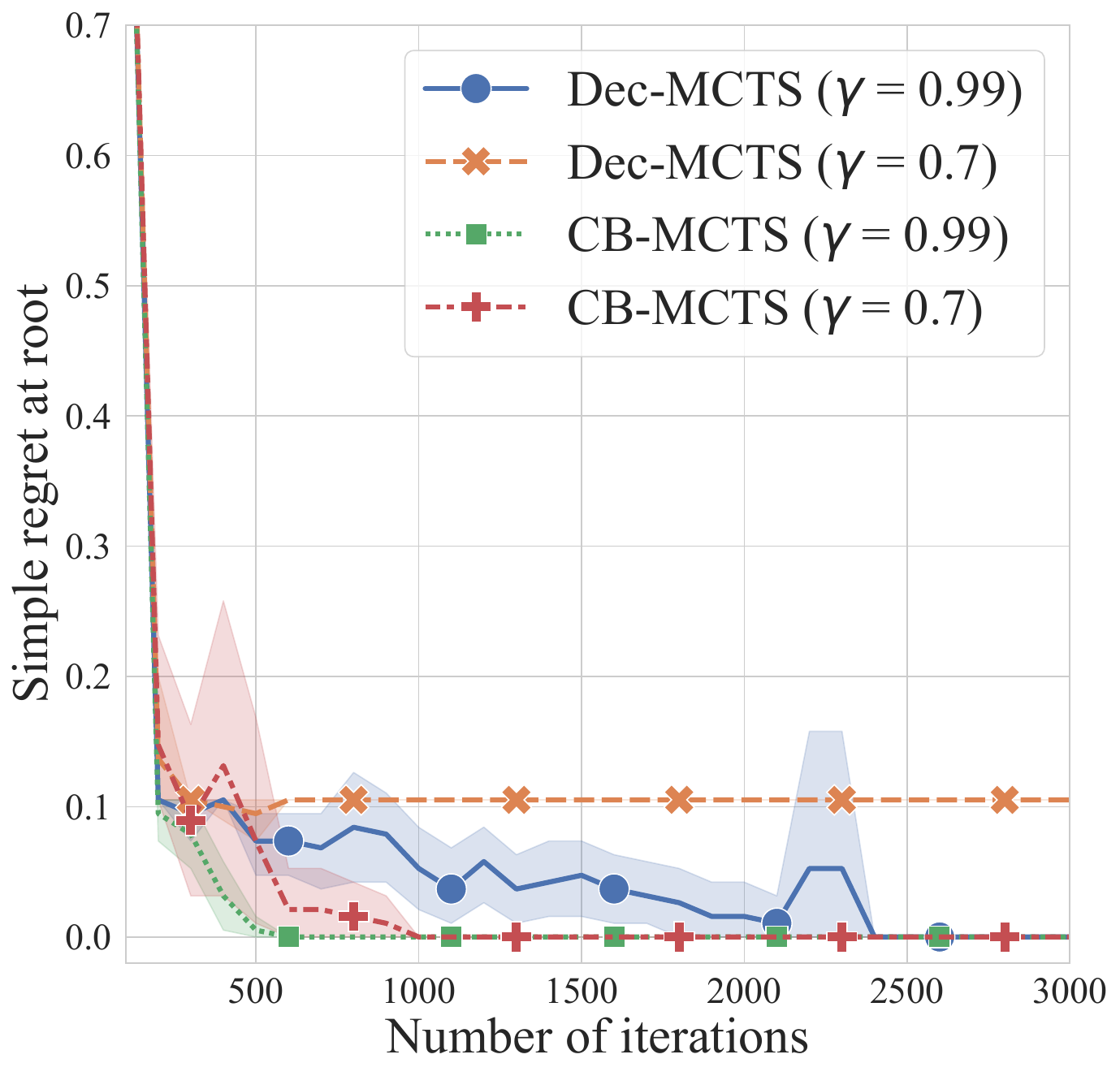}
        \caption{Simple regret of CB-MCTS and Dec-MCTS in the multi-agent D-chain problem with $D = 10$ and 2 agents.}
        \label{fig:d_chain_main}
    \end{center}
\end{figure}

\section{Empirical Evaluation}

\begin{figure*}[tb]
    \small
    \centering
    \includegraphics[width=0.95\linewidth]{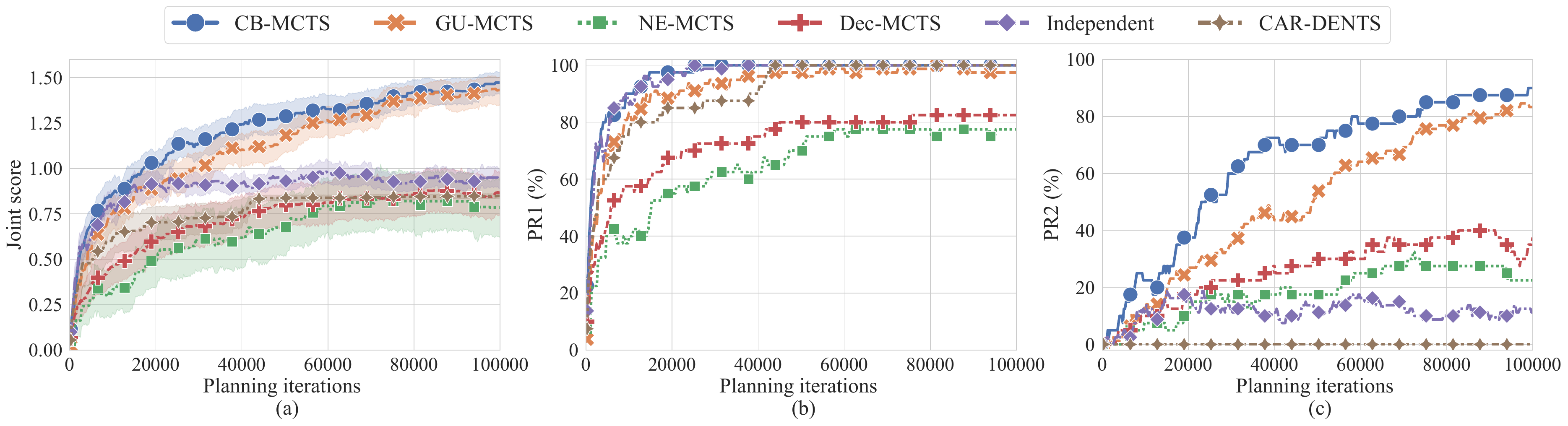}
    \caption{Performance comparison on the Frozen Lake benchmark.}
    \label{fig:frozenlake_main}
\end{figure*}

\begin{figure*}[tb]
    \small
    \centering
    \includegraphics[width=0.95\linewidth]{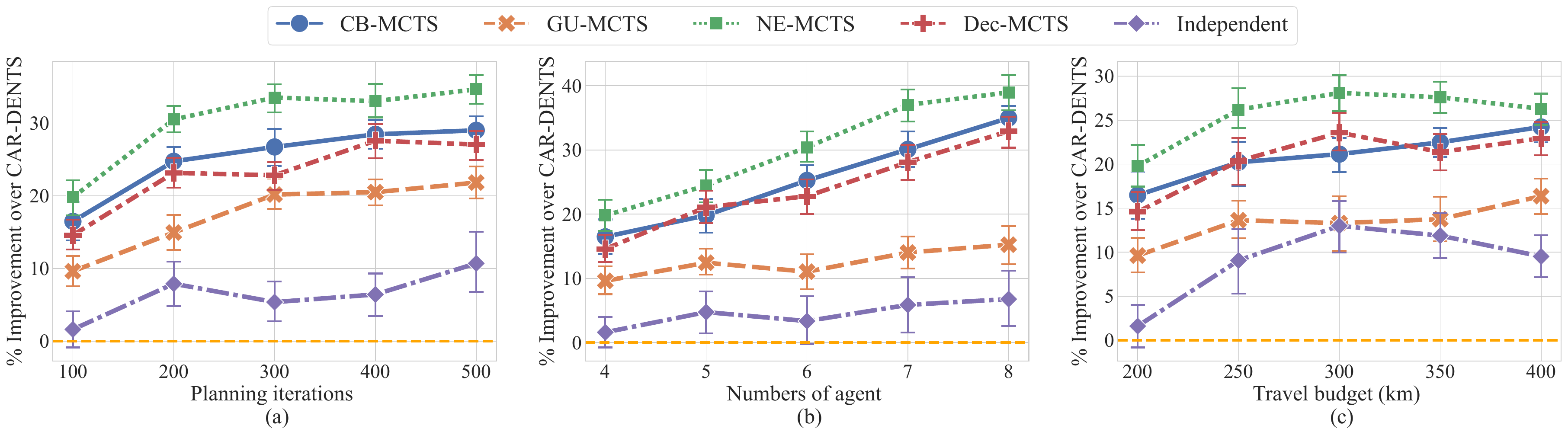}
    \caption{Performance comparison in the Oil Rigs Inspection problem.}
    \label{fig:oilrig_main}
\end{figure*}

We evaluate CB-MCTS on different multi-agent cooperative planning tasks. We consider the following baselines:

\begin{itemize}
    \item \emph{Dec-MCTS}: leading version of decentralized MCTS with D-UCT \cite{best2019dec}.
    \item \emph{GU-MCTS}: CB-MCTS using global utility instead of marginal contributions.
    \item \emph{NE-MCTS}: CB-MCTS without entropy $\left(\beta(m)=0\right)$.
    \item \emph{Independent}: CB-MCTS runs independently per agent (equivalent to the single-agent algorithm AR-DENTS \cite{painter2023monte}).
    \item \emph{CAR-DENTS}: AR-DENTS adapted for centralized multi-agent planning where a single tree encodes all agents, with agent $n$ acting at depths $(n, n+N, n+2N,\dots)$.
\end{itemize}

We tuned all methods on small validation instances. The hyperparameter selections and environment setup details are in the Appendix.

\subsection{Frozen Lake Problem}

We consider the \emph{Frozen Lake} benchmark \cite{towers2024gymnasium}, a grid-world where each cell is either safe or a hole. The agent starts in the top-left corner and moves until reaching a goal, falling into a hole, or exhausting its budget. We extend the task to a multi-agent setting with two goal positions. An agent reaching a goal at step $t$ receives a score of $0.99^t$; otherwise it receives $0$. Multiple agents selecting the same goal do not yield additional reward. This setup generalizes the multi‑goal stochastic navigation problems. Beyond the joint score, we measure the probability that at least one goal is reached (\textit{PR1}) and the probability that both goals are reached (\textit{PR2}). Each algorithm is evaluated over 80 runs on four $8{\times}12$ maps with two goals, using a planning budget of 100 steps and reporting metrics every 250 iterations.

As shown in Figure~\ref{fig:frozenlake_main}, CB-MCTS reaches both goals up to $40\%$ more often than Dec-MCTS and attains a $70\%$ higher joint score. The problem’s sparse reward structure favors Boltzmann-based exploration, which increases the chance of discovering successful trajectories. The entropy-guided search further mitigates premature termination by avoiding low-entropy (hole-adjacent) actions. Without this mechanism, NE-MCTS exhibits a substantial performance drop.

Independent and CAR-DENTS can reach at least one goal (Figure~\ref{fig:frozenlake_main}b) but frequently miscoordinate, sending both agents to the same target. GU-MCTS can eventually match CB-MCTS in joint score, but directly optimizing the global utility yields high-variance estimates and unstable coordination. In contrast, CB-MCTS leverages marginal contributions to decouple each agent’s influence, enabling faster convergence. As shown in Figure~\ref{fig:frozenlake_main}c, CB-MCTS achieves a $60\%$ PR2 level twice as fast and an $80\%$ PR2 level 1.5× faster than GU-MCTS. Overall, the results demonstrate that the components of CB-MCTS collectively provide robust performance on reward-sparse, decentralized planning tasks.

\subsection{Oil Rigs Inspection Problem}

We consider an oil rig inspection task in a 200 km $\times$ 100 km region containing 1000 oil rigs (ORs). A team of $N$ autonomous vehicles, starting from a common depot, must visit 200 randomly selected ORs, each with a 2.5 km observation radius. Agents move on an undirected graph $G$ covering the region, where every OR intersects at least one edge. An OR is \emph{visited} when an agent traverses an edge within its observation range. All agents know $G$ and plan paths to maximize total OR coverage under a uniform travel budget. This setup generalizes multi-robot informative path planning problems and is widely used for evaluating decentralized MCTS methods \cite{best2019dec, nguyen2022multi}. Each algorithm is run 40 times over 4 OR subsets. Performance is measured as the percentage of visited ORs. CAR-DENTS is evaluated in a centralized training, decentralized execution (CTDE) regime, where all agents’ actions are planned offline for 3000 iterations. Distributed algorithms use online replanning: each agent plans, executes its first edge, and replans until exhausting its budget.

Figure~\ref{fig:oilrig_main} summarizes performance under varying parameters (default: 4 agents, 100 planning iterations per cycle, 200 km travel budget). Despite the dense and smooth reward landscape, which typically favors UCT-style planners, CB-MCTS consistently matches Dec-MCTS and surpasses it with additional planning iterations. Dense rewards also increase coordination complexity, since overlapping coverage reduces global value. Accordingly, GU-MCTS (optimizing global utility directly) and Independent perform notably worse due to high-variance value estimates and limited coordination. The online distributed methods further benefit from parallelization, allowing agents to expand deeper local search trees and outperform the CTDE baseline.

Notably, NE-MCTS consistently performs best and maintains a $5$–$10\%$ improvement over Dec-MCTS. This suggests that in environments with dense, smooth reward distributions, removing entropy can lead to better empirical performance, as the Boltzmann temperature schedule effectively controls exploration. It also improves computational efficiency by lowering search variance and runtime overhead. Taken together, these results show that CB-MCTS is scalable and adaptable to a wide range of multi-agent planning problems, from smooth to sparse reward environments.

\section{Conclusion}

Efficient coordination in multi-agent planning remains challenging, especially when optimal actions initially appear suboptimal. We introduced CB-MCTS, a distributed algorithm that promotes early exploration while optimizing collective utility. Experiments show that CB-MCTS matches state-of-the-art methods in general settings and significantly outperforms them in deceptive environments requiring extensive exploration. Future work will explore how adversarial perturbations affect cooperative planning and evaluate the robustness of CB-MCTS under such conditions.

\section{Acknowledgments}

This research was supported in part by the UNITY-6G project, co-funded by the European Union’s Horizon Europe Smart Networks and Services Joint Undertaking (SNS JU) under Grant Agreement No. 101192650. Additional support was provided by the Swiss State Secretariat for Education, Research and Innovation (SERI), as well as by the FAIR GANDEEP and RADIANT projects.

\bibliography{aaai2026}

\section{Appendix}

\subsection{Technical Results}
This section discusses additional details of the theoretical results presented in the main paper. Specifically:
\begin{itemize}
    \item First, we define some important notation and assumptions about distributed MCTS with discounted empirical backpropagation;
    \item Second, we present preliminary results that will be useful for the proofs in the main paper;
    \item Third, we provide the proof of the suboptimal performance of Dec-MCTS in the D-chain problem;
    \item Fourth, we provide the proof of the simple regret of Dec-MCTS;
    \item Finally, we provide the proof of the simple regret of our proposed CB-MCTS algorithm.
\end{itemize}

\subsubsection{Notations and Assumptions}

For each agent executing CB-MCTS or Dec-MCTS, the distributed planning process can be modeled as a sequential decision-making problem, formulated as a stochastic multi-armed bandit (MAB). In the MCTS framework, each “arm” corresponds to a child node selected during tree traversal, enabling the application of MAB analysis to tree search.

A few notations are needed for the analysis. Let $I_t \in \{1, \ldots, K \}$ denote the arm pulled at round $t$, with $K$ being the number of possible arms. After selecting the arm $I_t=i$, we receive a stochastic payoff $X_{i,t} \in [0,1]$. The sequence of payoffs generate the stochastic process $\{X_{i,t}\}_t$,  $i =1,\ldots, K$ for $t \geq 1$. Let $\mu_{i,t}$ be the expected reward for arm $i$ at time $t$.

We first recall key aspects of distributed MCTS with discounted empirical backpropagation. At iteration $T$, given a discount factor $\gamma \in [0.5, 1)$, the discounted number of visit times for an arbitrary node $i$ is updated as:

\begin{equation}\label{eq:discounted_n}
   N_i = \sum\nolimits_{t=1}^T \gamma^{T-t} \mathbf{1}_{ \left\{ a^t=i \right\} },\
\end{equation}

\noindent where $\mathbf{1}_{ \left\{ a^t=i \right\} }$ is the indicator function that returns $1$ if node $i$ was selected at round $t$ and $0$ otherwise. The discounted empirical value for node $i$ is updated as:
\begin{equation}
\bar{X}_{i,N_i} = \frac{1}{N_i} \sum\nolimits_{t=1}^{T} \gamma^{T-t}X_{i,t} \ \mathbf{1}_{ \left\{ a^t=i \right\} }\ .
\end{equation}

Due to the interdependence among agents’ action sequences, the expected reward $\mu_{i,t}$ may change abruptly over time. Specifically, reward distributions remain stationary over certain intervals but shift at unknown time instants, referred to as \textit{breakpoints}. Therefore, the analysis proceeds under the four key assumptions.

\begin{assumption} (Independence)\label{as:1}
  Fix $1\leq i\leq K$. Let $\{\mathcal{F}_{i,t}\}_t$ be a filtration such that $\{{X}_{i,t}\}_t$ is $\{\mathcal{F}_{i,t}\}_t$-adapted and ${X}_{i,t}$ is conditionally independent of $\mathcal{F}_{i,t+1},\mathcal{F}_{i,t+2}, \ldots$ given $\mathcal{F}_{i,t-1}$. Further, there exists an integer $T_p$ such that for $t_i \geq T_p$ and $t < t_i, X_{i,t}$ is independent from $\mathcal{F}_{i,t}$.
\end{assumption}

Let $\Upsilon_t$ denote the number of breakpoints before time $t$, where a breakpoint is defined as the time instant where distributions of the rewards change.

\begin{assumption} (Finite Number of Changes) \label{as:2}
 The sequence $\{\Upsilon_t\}_t$ is known and bounded such that  $\lim_{t \rightarrow \infty} \Upsilon_t = \sup_t \Upsilon_t  < \infty$ and $\Upsilon_{t+1} \geq \Upsilon_{t}$.
\end{assumption}
We also assume that the expected payoff $\mu_{i,t}$ converges.
\begin{assumption} (Convergence of means) \label{as:3}
 The limit $\mu _i = \lim _{t \rightarrow \infty} \mu_{i,t}$ exists for all $i \in \{1,\ldots,K\}$.
\end{assumption}
Let the difference between the two quantities be $\delta_{i,t} = \mu_{i,t}- \mu_i$. For any arbitrary time $t$, denote the optimal arm as $t_{i^*},$ and define the optimal expected payoff by $\mu_{i^*_t,t} = \max_{i \in \{1,\ldots,K\} \mu_{i,t}}.$ The average expected payoff up to time $t$ is
$$\mu_t^* = \frac{1}{t} \sum_{u=1}^{t} \mu_{i^*_u,u}.$$

The minimum difference between the optimal reward and the instantaneous reward up to time $t$ is defined as
$$\Delta_{i,t} = \min_{u \in \{1,\ldots,t\}} \{\mu_{i^*_u,u} - \mu_{i_u,u}: i\neq i^*_u \}.$$

Let $M_i(t)$ be the number of times arm $i$ is pulled following the most recent breakpoints. The following assumption requires that the drift $\delta_{i,t}$ is proportional to $\Delta_{i,t}$ after a finite burn-in period.

\begin{assumption} (Small drifts) \label{as:4}
    There exists an index $T_0(\epsilon)$ such that for any arbitrary $\epsilon >0$ and $M_i(t) \geq T_0(\epsilon), |\delta_{i,t}| \leq \epsilon \Delta_{i,t}/2$ and $|\delta^*|\leq \epsilon  \Delta_{i,t}/2$ for all $i$.
\end{assumption}

\subsubsection{Preliminary Results}

The CB-MCTS algorithm performs two fundamental tasks: \textit{exploration} and \textit{recommendation}. During the exploration phase, at each round $t$, the algorithm selects an arm $I_t=i$ according to a sampling policy and observes the corresponding reward $X_{i, t}$. After $T$ exploration rounds, the algorithm outputs a single arm $J_T$ to be pulled in a one-shot instance. The simple regret quantifies the loss incurred when the recommended arm $J_T$ is suboptimal. More formally, the \textit{simple regret} at round $T$ equals the regret of a one-shot instance of the game for the recommended arm $J_T$:

$$r_T = \mu_{i^*_T,T} - \mu_{J_T} = \Delta_{J_T}.$$ 

Recall that CB-MCTS and Dec-MCTS recommend the action sequence $a^n$ with the highest probability from the subset $\hat{\mathcal{A}^n}$, where probabilities are updated based on observed empirical returns. Consequently, the recommended action is, with high probability, the one achieving the highest empirical reward. The expected simple regret can thus be related to the probability that a suboptimal arm attains a higher empirical mean than the optimal arm. In what follows, we connect this probability to the expected number of pulls of suboptimal arms to derive the simple regret bound.

\begin{lemma}
    \label{lem:asymptotic-regret}
    Let $r_T$ denote the simple regret after $T$ iterations, and suppose the expected number of pulls for suboptimal arm $i$ is $\mathbb{E}[N_i]$. Then, the expected simple regret decays asymptotically as
    \begin{equation}
        \mathbb{E}[r_T] = \mathcal{O}\left( \sum_{i=1}^K \Delta_{i,T} \exp\left( - c_i \Delta_{i,T}^2 \mathbb{E}[N_i] \right) \right),
    \end{equation}
    \noindent
    for some constants \( c_i > 0 \) depending on the algorithm parameters.
\end{lemma}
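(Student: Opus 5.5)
We reduce the expected simple regret to a union over misidentification events and then control each event with a Hoeffding-type concentration bound for the discounted empirical means.

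Write $i^*$ for the optimal arm at time $T$. Since the recommendation $J_T$ is, with high probability, the arm with the largest discounted empirical mean, we start from
\begin{equation*}
\mathbb{E}[r_T]=\sum_{i\neq i^*}\Delta_{i,T}\,\Pr(J_T=i)\le \sum_{i\neq i^*}\Delta_{i,T}\,\Pr\bigl(\bar X_{i,N_i}\ge \bar X_{i^*,N_{i^*}}\bigr)+\eta_T .
\end{equation*}
Here $\eta_T$ accounts for the discrepancy between the highest-probability element of $p^n$ and the empirical argmax. We treat $\eta_T$ as lower order, since the consensus protocol concentrates $p^n$ on the empirically best rollout.

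We then split the event $\{\bar X_{i,N_i}\ge \bar X_{i^*,N_{i^*}}\}$ into two pieces:
\begin{itemize}
\item $\{\bar X_{i,N_i}\ge \mu_{i,T}+\Delta_{i,T}/2\}$, where the suboptimal arm is overestimated;
\item $\{\bar X_{i^*,N_{i^*}}\le \mu^*-\Delta_{i,T}/2\}$, where the optimal arm is underestimated.
\end{itemize}
We handle the nonstationarity as follows. By Assumption~\ref{as:2} there are finitely many breakpoints. By Assumption~\ref{as:4}, once $M_i(t)\ge T_0(\epsilon)$, the drift satisfies $|\delta_{i,t}|\le \epsilon\Delta_{i,t}/2$. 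So after the last breakpoint, and after the burn-in, the bias of $\bar X_{i,N_i}$ relative to $\mu_i$ is at most $\epsilon\Delta_{i,T}/2$. The contribution of pre-breakpoint samples is damped by the factor $\gamma^{T-t}$, so it is exponentially small in $T$ and can be absorbed into the constants.

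On each of the two pieces we apply an Azuma--Hoeffding inequality for the weighted martingale-difference sequence $\gamma^{T-t}(X_{i,t}-\mu_{i,t})\mathbf 1_{\{a^t=i\}}$. Assumption~\ref{as:1} gives the required conditional independence with respect to the filtration $\{\mathcal F_{i,t}\}$ for $t\ge T_p$. For deviation size $\Delta_{i,T}(1-\epsilon)/2$ this yields a bound of the form
\begin{equation*}
\exp\bigl(-c\,\Delta_{i,T}^2 N_i^2/\textstyle\sum_t \gamma^{2(T-t)}\mathbf 1_{\{a^t=i\}}\bigr)\le \exp(-c\,\Delta_{i,T}^2 N_i),
\end{equation*}
using $\gamma^{2(T-t)}\le\gamma^{T-t}$. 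The same bound holds for $i^*$ with $N_{i^*}$. Because $N_{i^*}\gtrsim N_i$ for a consistent sampling policy, the $i^*$ term is dominated by the $i$ term up to constants.

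The main obstacle is that $N_i$ is random and depends on the observed rewards, whereas the target bound is stated in terms of $\mathbb{E}[N_i]$. We plan two ways to pass from $N_i$ to $\mathbb{E}[N_i]$:
\begin{itemize}
\item Condition on the good event $\{N_i\ge \tfrac12\mathbb{E}[N_i]\}$ and control its complement with a separate concentration bound on the pull counts. For CB-MCTS this is available because $\lambda_{i,t}$ forces a nonvanishing sampling rate. The complement's probability is then also of the form $\exp(-c'\,\mathbb{E}[N_i])$, and since $\Delta_{i,T}\le1$ it merges into the main term.
\item If that fails, fall back on Jensen's inequality with the convexity of $x\mapsto e^{-cx}$. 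This gives the wrong direction, so it is only usable after a peeling or union bound over the possible values of $N_i$.
\end{itemize}

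Finally, we sum over $i$, absorb $\eta_T$ and the breakpoint and burn-in terms into the $\mathcal O(\cdot)$, and choose $c_i=c(1-\epsilon)^2/4$ adjusted by the constants that depend on $\gamma$.
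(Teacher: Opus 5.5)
Your skeleton matches the paper's. You bound $\Pr(J_T=i)$ by $\Pr(\bar X_{i}\ge \bar X_{i^*})$, split at $\mu_i+\Delta_{i,T}/2$, and apply a discounted Hoeffding-type inequality to each half. Your slack $\eta_T$ and the claim that $N_{i^*}$ dominates $N_i$ are asserted without proof, but the paper relies on the same things implicitly. The concentration step, however, does not go through as written. Azuma--Hoeffding controls a martingale against a \emph{deterministic} variance proxy and a \emph{deterministic} threshold. Your normalizer $\sum_t\gamma^{2(T-t)}\mathbf 1_{\{a^t=i\}}$ and the threshold $(\Delta_{i,T}/2)N_i$ defining the bad event are both random, so an expression like $\exp(-c\,\Delta_{i,T}^2N_i)$ with random $N_i$ is not a bound on a probability. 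What you need is a self-normalized inequality obtained by peeling over the value of the discounted count. That is exactly Theorem~4 of Garivier and Moulines, which the paper invokes; its prefactor $\lceil \log\frac{1}{1-\gamma}/\log(1+\eta)\rceil$ is the price of the peeling. Even that only yields $\Pr\bigl(\bar X_{i,N_i}\ge \mu_i+\Delta_{i,T}/2,\ N_i\ge n\bigr)\le C_{\gamma,\eta}\exp(-c\,\Delta_{i,T}^2 n)$ for a deterministic level $n$.

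The remaining step is the one you flag yourself: taking $n$ of order $\mathbb{E}[N_i]$. Neither of your routes closes it. In route (a), the uniform component $\lambda_{s,t}$ only gives a deterministic floor $\sum_t\gamma^{T-t}\lambda_{s,t}/K$ on the compensator $\sum_t\gamma^{T-t}\pi_{s,t}(i)$. An Azuma/Freedman argument on $N_i$ minus this compensator therefore lets you take $n$ of the order of that floor, not $\frac{1}{2}\mathbb{E}[N_i]$. The Boltzmann part of the compensator depends on the observed rewards and need not concentrate around its mean, so your claimed bound on $\Pr(N_i<\frac{1}{2}\mathbb{E}[N_i])$ is unsupported. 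Route (a) is also specific to CB-MCTS, while the lemma is applied to D-UCT in Theorem~1. Route (b) fails for the reason you give. The paper does not resolve this either; it simply writes $\mathbb{E}[N_i]$ into the exponent of the Garivier--Moulines bound.

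For the discounted count (the $N_i$ of the paper's notation and of your display), the literal statement has a one-line proof. Since $\mathbb{E}[N_i]\le\sum_{t\le T}\gamma^{T-t}\le 1/(1-\gamma)$ and $\Delta_{i,T}\le 1$, we get $\mathbb{E}[r_T]\le\sum_i\Delta_{i,T}\le e^{\max_j c_j/(1-\gamma)}\sum_i\Delta_{i,T}\exp(-c_i\Delta_{i,T}^2\mathbb{E}[N_i])$ for any $c_i>0$. The hidden constant depends on $\gamma$, as the paper's prefactor already does. This settles the lemma as stated, but it also shows the bound carries no decay in $T$ at fixed $\gamma$. A version with real content needs a lower-tail bound on $N_i$, which neither your plan nor the paper provides.
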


\begin{proof}

The expected simple regret can be related to the probability of selecting a suboptimal arm as follows:
\begin{align}
\label{eq:expected_regret}
    \mathbb{E} [r_T] = \mathbb{E} [\Delta_{J_T}] &= \sum_{i=1}^K \Delta_{i,T} \mathbb{P}\{J_T = i\}\\ \nonumber
    &\leq \sum_{i=1}^K \Delta_{i,T} \mathbb{P}\{\bar{X}_{i,T} \geq \bar{X}_{i^*,T}\}
\end{align}

\noindent where the upper bound follows the fact that to be the empirical best arm, an arm $i$ must have performed better than the best arm $i^*$. Split the interval between $\mu_i$ and $\mu^*$ at $\mu_i + \frac{\Delta_i}{2}$. With Assumptions 1 and 4 satisfied, applying Theorem 4 from \cite{garivier2011upper}:

\begin{align}
\label{eq:pooly_estimate}
    \mathbb{P}\{\bar{X}_{i,T} \geq \bar{X}_{i^*,T}\} &\leq \Pr\left[ \bar{X}_{i,T} > \mu_i +\frac{\Delta_i}{2} \right]\\ \nonumber
    &+ \Pr\left[ \bar{X}_{*,T} < \mu^* - \frac{\Delta_i}{2} \right] \\ \nonumber
    \leq &2\left\lceil\frac{\log \frac{1}{1-\gamma}}{\log (1+\eta)}\right\rceil \exp \left(- \frac{\Delta_{i,T}^2 \mathbb{E}[N_i]}{2\left(1-\frac{\eta^2}{16}\right)}\right)
\end{align}
\noindent for all positive $\eta$.

Substituting (\ref{eq:pooly_estimate}) into (\ref{eq:expected_regret}), we relate the bound on the expected simple regret to the expected number of pulls of the suboptimal arms:
\begin{equation}
\label{eq:regret_pulls}
    \mathbb{E} [r_T] \leq 2\left\lceil\frac{\log \frac{1}{1-\gamma}}{\log (1+\eta)}\right\rceil \sum_{i=1}^K \Delta_{i,T} \exp \left(- \frac{\Delta_{i,T}^2 \mathbb{E}[N_i]}{2\left(1-\frac{\eta^2}{16}\right)}\right)
\end{equation}
\end{proof}
\subsubsection{Proof of Lemma 1}

We prove the lemma using contradiction. Recall that at each depth level $d < D$ in the D-Chain problem, selecting action 1 transitions to the next decision node at level $d+1$, while choosing any other action terminates the path at a leaf node and yields an immediate reward of $(D-d)/D$. At depth $D$, action 1 yields an immediate reward of 1, whereas all other actions result in 0.

During the planning phase, at each node in the tree search, Dec-MCTS with D-UCT selects the child node $i$ with the highest D-UCT scores:

\begin{equation}
    B_{i, N_i} \coloneqq \bar{X}_{i,N_i} + \sqrt{\frac{\varepsilon \log(N_j)}{N_i}}.
\end{equation}

For the optimal reward 1 to be reached for the first time, at each depth level $d < D$, we must have the confidence bound of all suboptimal actions less than the bound of action 1. That is:

\begin{equation}
    \bar{X}_{d',N_{d'}} + \sqrt{\frac{\varepsilon \log(N_{d})}{N_{d'}}} \leq \bar{X}_{d+1,N_{d+1}} + \sqrt{\frac{\varepsilon \log(N_{d})}{N_{d+1}}},
\end{equation}

\noindent with $d'$ is the leaf node resulting from the suboptimal actions. 

Since $\bar{X}_{d', N_{d'}} = (D - d)/D$ and $\bar{X}_{d+1, N_{d+1}} \leq (D - (d+1))/D$ (e.g., because the immediate reward of 1 has not been received yet), we can deduce that:

\begin{align}
    \frac{D - d}{D} + \sqrt{\frac{\varepsilon \log(N_{d})}{N_{d'}}} &\leq \frac{D - (d + 1)}{D} + \sqrt{\frac{\varepsilon \log(N_{d})}{N_{d+1}}} \nonumber \\
    \Leftrightarrow \quad \frac{1}{D} + \sqrt{\frac{\varepsilon \log(N_{d})}{N_{d'}}} &\leq \sqrt{\frac{\varepsilon \log(N_{d})}{N_{d+1}}} \nonumber \\
    \Rightarrow \quad \frac{1}{D} &\leq \sqrt{\frac{\varepsilon \log(N_{d})}{N_{d+1}}} \nonumber \\
    \Leftrightarrow \quad N_{d} &\geq \exp\left(\frac{N_{d+1}}{\varepsilon D^2} \right).
    \label{eq:exp-bound}
\end{align}

By induction, we have the lower bound on the discounted number of times the root node is visited before the optimal reward 1 is reached for the first time as:
\begin{equation}
\label{eq: lower_bound}
    N_1 = \Omega(exp(exp(...exp(1)...)))
\end{equation}

\noindent (composition of $D-1$ exponential functions).

On the other hand, from the discounted number of times the node $i$ is visited given in (\ref{eq:discounted_n}), we can derive the upper bound on the discounted number of times the root node is visited by:
\begin{equation}
\label{eq: upper_bound}
    N_1 \leq \sum\nolimits_{t=1}^T \gamma^{T-t} = \frac{1-\gamma^T}{1-\gamma} \leq \frac{1}{1-\gamma}    
\end{equation}

Combining (\ref{eq: lower_bound}) with (\ref{eq: upper_bound}), for a large enough value of $D$ such that  $$ \Omega(exp(exp(...exp(1)...))) \geq \frac{1}{1-\gamma} $$ the two bounds are contradictory.

Thus, the algorithm cannot reach the optimal node for a large tree depth $D$. That is, Dec-MCTS with UCT will recommend a non-optimal solution after a finite planning iterations in any D-chain environment with sufficient depth $D$.

Algorithm~\ref{alg:extended_dchain} illustrates how this problem can be extended to a generalized skewed environment. In this extension, suboptimal actions lead to bounded-depth subtrees that terminate according to a condition $\text{TermRules}: S \times \mathbb{N} \rightarrow \{\text{True}, \text{False}\}$. The rewards of leaf nodes in these subtrees are determined by a value function $V: S \rightarrow [0,1)$. These extensions allow the environment to capture deceptive reward structures that naturally arise in various domains, including classic grid-world games such as Frozen Lake~\cite{towers2024gymnasium} and multi-agent applications like search and rescue~\cite{rahman2022adversar}.

\setcounter{algorithm}{1}
\begin{algorithm}[tb]
    \caption{Generalized Deceptive Tree Generation}
    \label{alg:extended_dchain}
    \textbf{Input}: tree depth $D$, branching factor $K$, termination rule $\textsc{TermRules}$, value function $V(s)$
    \begin{algorithmic}[1]
        \State Initialize root node $s_0$
        \State \textsc{BuildMainTree} $(s_0, 0)$
        
        \vspace{1mm}
        \Function{BuildMainTree}{$s$, $d$}
            \For{$a = 1$ \textbf{to} $K$}
                \State Create child $s'$ and assign to $\text{children}_s[a]$
                \If{$d = D$}
                    \State $r(s') \gets \mathbf{}{1}_{\{ a = 1 \}}$  \Comment{1 if $a=1$, else 0}
                \ElsIf{$a = 1$}
                    \State \Call{BuildMainTree}{$s'$, $d+1$}
                \Else
                    \State \Call{BuildSubtree}{$s'$, $d+1$}
                \EndIf
            \EndFor
        \EndFunction
        
        \vspace{1mm}
        \Function{BuildSubtree}{$s$, $\delta$}
            \If{$\text{TermRules}(s, \delta)$}
                \State $r(s) \gets V(s)$; \Return
            \EndIf
            \For{$a = 1$ \textbf{to} $K$}
                \State Create child $s'$ and assign to $\text{children}_s[a]$
                \State \Call{BuildSubtree}{$s'$, $\delta + 1$}
            \EndFor
        \EndFunction
    \end{algorithmic}
\end{algorithm}

\subsubsection{Proof of Theorem 1}

We note that Assumptions 1-4 are also used in \cite{best2019dec}. With these assumptions satisfied, Lemma 1 from \cite{best2019dec} provides the upper bound on the number of pulls of the suboptimal arms as:
\begin{equation}
    \mathbb{E}[N_i] \leq \mathcal{O}(\sqrt{T}\log T)
\end{equation}

Combining this result with Lemma \ref{lem:asymptotic-regret} yields the simple regret:

\begin{equation}
    \mathbb{E}[r_T] \leq  C \exp\left( - k \cdot \sqrt{T} \log T \right)
\end{equation}

\noindent with some constants $C, k > 0$.

\subsubsection{Proof of Theorem 2}
We first provide an upper bound on the number of pulls of the suboptimal arms $i$.
From the definition of $\pi_{s,t}(i)$ we decompose the expected number of pulls for arm $i$ into two parts:
\begin{align}
    \mathbb{E}[N_i] &= \sum_{t=1}^T \mathbb{E}[\pi_{s,t}(i)]\\ \nonumber
    &= \sum_{t=1}^T \left((1 - \lambda_{s,t}) \cdot \mathbb{E}[\rho_{s,t}(i)] + \frac{\lambda_{s,t}}{K} \right) \\ \nonumber
    & \leq \sum_{t=1}^T \mathbb{E}[\rho_{s,t}(i)] + \sum_{t=1}^T \frac{\lambda_{s,t}}{K}
\end{align}

Each part is the sum of pull probabilities over time due to two different factors: the first part is attributed to Boltzmann selection, and the second part is attributed to uniform exploration. We bound the second part first. Observe that $\lambda_{s,t}$ is 1 when $t$ is small and decays as $t$ grows large. Specially, there exists $t_0$ such that $\log(e+t_0) = \varepsilon$ and for $t > t_0$ we have $\lambda_{s,t} = \frac{\varepsilon}{\log(e + t)}$. The sum can then be upper-bounded by:

\begin{align}
    \sum_{t=1}^T \lambda_{s,t} &= \sum_{t=1}^T \min\left(1, \frac{\varepsilon}{\log(e + t)}\right) \\ \nonumber
    &\leq t_0 + \varepsilon\sum_{t=t_0+1}^T \frac{1}{\log(e+t)}
\end{align}

We bound the second term using an integral:
\begin{equation}
    \sum_{t=t_0+1}^T \frac{1}{\log(e+t)} \leq \int_{t_0+1}^T\frac{dt}{\log(e+t)} \leq \frac{T}{\log T}
\end{equation}

Collecting terms yields the bound on the number of pulls from uniform exploration:
\begin{equation}
    \sum_{t=1}^T \frac{\lambda_{s,t}}{K} \leq  \mathcal{O} \left( \frac{T}{K\log T} \right)
\end{equation}

We now bound the number of pulls from Boltzmann selection. Denote the number of times arm $i$ has been pulled attributed to the Boltzmann selection up until timestep $t$ as $N_{i,t}$. We split the number of pulls from Boltzmann selection into a \textit{learning phase} where each arm will be pulled at least $l_T$ times to get a decent estimated empirical reward, and a later phase where the number of pulls depends on the Boltzmann probabilities:
\begin{equation}
    \sum_{t=1}^T \mathbb{E}[\pi_{s,t}] \leq l_T +  \sum_{t=l_n+1}^T \mathbb{E}[\rho_{s,t}(i)]
\end{equation}

For simplicity, we give the exact form of $\rho_{s,t}(i)$:
\begin{align}
    \rho_{s,t}(i) &= \frac{\exp((\bar{X}_{i,t}+\beta(N_s)H_i)/\alpha(N_s))}{\sum_{j=1}^K \exp((\bar{X}_{j,t}+\beta(N_s)H_j)/\alpha(N_s))} \\ \nonumber
    &\leq \frac{\exp((\bar{X}_{i,t}+\beta(N_s)H_i)/\alpha(N_s))}{\exp((\bar{X}_{*,t}+\beta(N_s)H_*)/\alpha(N_s))} \\ \nonumber
    &= \exp \left(\frac{\bar{X}_{i,t}-\bar{X}_{*,t}+\beta(N_s)(H_i-H_*)}{\alpha(N_s)}\right)
\end{align}

Using Lemma 20 from \cite{painter2023monte}, we know that all arms will be sampled infinitely often. Thus, with Assumptions 2 and 3 satisfied, for any arm $i$, its empirical means $\bar{X}_{i,t}$ will be close to its true mean $\mu_{i,t}$ with high probability:
\begin{align}
    \bar{X}_{i,t} \in  [ \mu_{i,t} - \epsilon, \mu_{i,t} + \epsilon]\\
    \bar{X}_{*,t} \in  [ \mu_{*,t} - \epsilon, \mu_{*,t} + \epsilon]
\end{align} 

We consider the \textit{good event} where let $\epsilon = \Delta_i/4$ and with high probability:
\begin{equation}
    \bar{X}_{*,t} - \bar{X}_{i,t} \geq ( \mu_{*,t} - \epsilon) - (\mu_{i,t} + \epsilon) = \Delta_i - 2\epsilon = \frac{\Delta_i}{2}
\end{equation}

We argue that this \textit{good event} does not hold when $\bar{X}_{*,t} - \bar{X}_{i,t} \leq \Delta_i/2$, which implies either $\bar{X}_{i,t} \geq \mu_{i,t} + \Delta_i/4$ or $\bar{X}_{*,t} \leq \mu_{*,t} - \Delta_i/4$:
\begin{align}
    \mathbb{P} \left( \bar{X}_{*,t} - \bar{X}_{i,t} \leq \Delta_{i,t}/2 \right) &= \mathbb{P} \left( |\bar{X}_{i,t} - \mu_i | \geq \frac{\Delta_{i,t}}{4} \right) \\ \nonumber
    &+ \mathbb{P} \left( |\bar{X}_{*,t} - \mu_* | \geq \frac{\Delta_{i,t}}{4} \right) \\ \nonumber
\end{align}

With Assumptions 1 and 4 satisfied, applying Theorem 4 from \cite{garivier2011upper}, we can bound the first probability as:
\begin{align}
    \mathbb{P} \left( |\bar{X}_{i,t} - \mu_i | \geq \frac{\Delta_{i,t}}{4} \right) \\ \nonumber
    \leq \left\lceil\frac{\log \frac{1}{1-\gamma}}{\log (1+\eta)}\right\rceil &\exp \left(-\frac{2}{1-\frac{\eta^2}{16}}\left(\frac{\Delta_{i,t}}{4} \right)^2 N_{i,t} \right)
\end{align}

\noindent for all positive constant $\eta$. Similarly, we can also obtain a bound on the second probability. Observe that, $N_{i,t} \geq l_T$ and $N_{i,t} < N_{*,t}$, choosing $l_T =  \lceil \frac{8 \log T}{\Delta^2_{i,t}} \rceil$ we can bound the probability that the \textit{good event} does not hold as:
\begin{align}
    \mathbb{P} \left( \bar{X}_{*,t} - \bar{X}_{i,t} \leq \Delta_{i,t}/2 \right) \\ \nonumber
    \leq 2\left\lceil\frac{\log \frac{1}{1-\gamma}}{\log (1+\eta)}\right\rceil &\exp \left(- \frac{2\log T}{1-\frac{\eta^2}{16}}\right)
\end{align}

\noindent which is very small, i.e., $\mathcal{O} (1/T)$. Collecting terms yields the bound on the Boltzmann probabilities:
\begin{align}
     \sum_{t=1}^T \mathbb{E}[\rho_{s,t}(i)] &\leq \mathcal{O} (\log T) \\ \nonumber
     &+ \sum_{t=l_T+1}^T\exp \left( \frac{-\Delta_i+\beta(N_s)(H_i-H_*)}{2\alpha(N_s)} \right)
\end{align}

Since $\alpha(\cdot) \rightarrow 0$ and $\beta(\cdot) \rightarrow 0$, the second term converges to a constant. This implies that after the initial $l_T$ pulls, the total number of additional pulls from the Boltzmann component is bounded by a constant.

Combining the bounds for both parts, we get the final upper bound on the total expected pulls for arm $i$ as:
\begin{equation}
    \mathbb{E}[N_i] \leq \mathcal{O} (\log T) + \mathcal{O} \left(\frac{T}{K\log T} \right)
\end{equation}

Since $\mathcal{O} \left(\frac{T}{K\log T} \right)$ grows much faster than $\mathcal{O} (\log T)$, for sufficient large $T$, the final upper bound is:
\begin{equation}
    \mathbb{E}[N_i] \leq \mathcal{O} \left(\frac{T}{K\log T} \right)
\end{equation}

Combining this result with Lemma \ref{lem:asymptotic-regret} yields the simple regret:
\begin{equation}
    \mathbb{E}[r_T] \leq C \exp\left( - k \cdot \frac{T}{\log T} \right)
\end{equation}
\noindent with some constants $C, k > 0$.


\subsection{Additional Experimental Results}
This section discusses additional details of the experiments and results presented in the main paper. For simplicity, we set both the $\alpha(m)$ and $\beta(m)$ functions of our algorithm to $1/log(e+m)$. We note that our convergence analysis requires $\alpha(m) \to 0$ and $\beta(m) \to 0$ as $m \to \infty$ to guarantee the convergence of simple regret in theory. However, the discount factor $\gamma$ imposes an upper bound of $1/(1-\gamma)$ on the effective number of node visits, so the decaying functions $1/log(e+m)$ never fully decay to zero. However, we argue that in practice, a non-fully decay to 0 functions is still enough to promote sufficient exploration and provide good performance to the overall algorithm, given appropriate system parameter tuning. 

This section is structured as follows:

\begin{itemize}
    \item First, we provide a comprehensive analysis of the simple regret in the D-chain problem;
    \item Second, we provide details on the environments and the hyperparameter search used to select parameters in the Frozen Lake problem;
    \item Third, we provide details on the environments and the hyperparameter search used to select parameters in the Oil Rig Inspection problem;
    \item Finally, we provide a study on the gap between theoretical assumptions and practical constraints, and analyze the effect of decaying functions $\alpha(\cdot)$.
\end{itemize} 

\subsubsection{Simple Regret in D-chain Problem}

We first analyze the simple regret and compare our approach with the leading version of the decentralized MCTS with D-UCT (Dec-MCTS) \cite{best2019dec} to verify our theoretical results. To this end, we consider the \emph{multi-agent D-chain} problem (Figure \ref{fig:d_chain}). For simplicity, we set the number of actions $M$ at each depth level equal to the number of agents $N$. Each algorithm is evaluated 40 times across 4 different D-chain configurations and analyzed every 100 planning iterations under various combinations of parameter settings. For both algorithms, we considered all combinations of the following parameters (the search temperature $\alpha_{init}$ is only considered for CB-MCTS):
\begin{itemize}
    \item Exploration bias $\varepsilon$: 0.5, 1, 10, 20;
    \item Discount factor $\gamma$: 0.7, 0.9, 0.95, 0.99;
    \item Initial search temperature $\alpha_{init}$: 0.01, 0.1, 0.5, 1.
\end{itemize}

\begin{figure}[tb]
    \includegraphics[width=\linewidth]{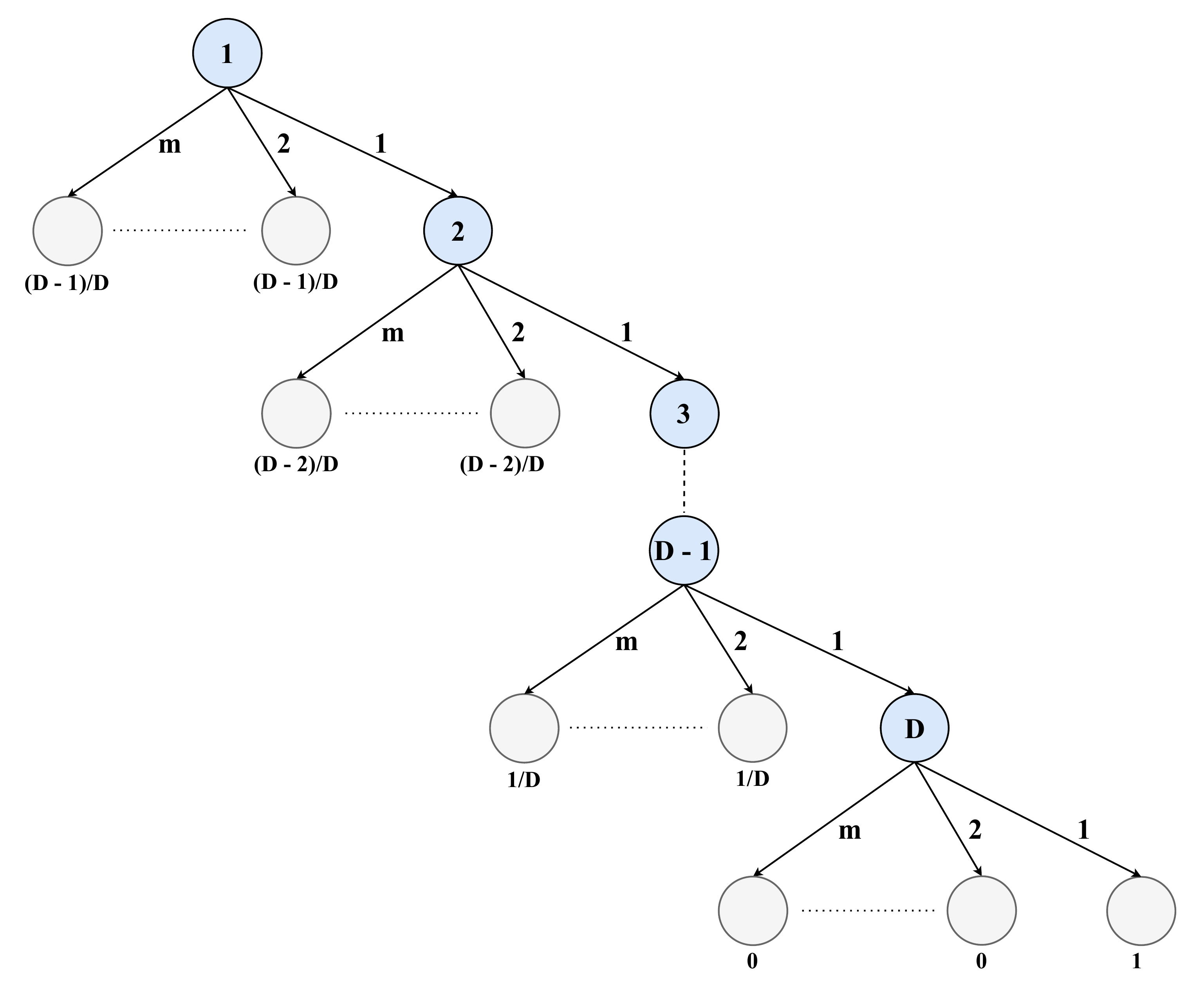}
    \caption{An illustration of the \emph{D-chain} problem on a \emph{m-ary tree} for multi-agent. Blue nodes are decision states, which can have at most $m$ children; and gray nodes are terminal states with corresponding rewards.}
    \label{fig:d_chain}
\end{figure}

Figures \ref{fig:dec_10_supp} and \ref{fig:cb_10_supp} illustrate the performance of Dec-MCTS and CB-MCTS, respectively, in a baseline scenario with two agents and a tree depth of $D=10$. While CB-MCTS consistently identifies optimal joint policies for all parameter combinations, Dec-MCTS requires a high exploration bias ($\varepsilon$) and a high discount factor ($\gamma$) to achieve similar results. This can be explained by the fact that in Dec-MCTS, $\gamma$ governs how past statistics are discounted during planning, particularly affecting the weighted mean of node utilities. A high discount factor is required for stable convergence as it ensures that older, more informative rollouts retain their influence. However, in scenarios with deceptive optimal settings, a low $\gamma$ impedes exploration and causes the algorithm to become trapped in local optima.

To further illustrate that high exploration bias and high discount factor alone are insufficient for effective search space exploration, we examined configurations with three agents and a tree depth of $D=10$, and with two agents and a tree depth of $D=20$. Figures \ref{fig:dec_33_supp} and \ref{fig:dec_20_supp} show that Dec-MCTS consistently becomes stuck in local optima in both cases. In contrast, CB-MCTS uses $\gamma$ similarly in discounted backpropagation, but its simple regret at the root node vanishes in all test cases (Figures \ref{fig:cb_33_supp} and \ref{fig:cb_20_supp}). This robustness stems from CB-MCTS's reliance on a Boltzmann search policy with a decaying entropy bonus, which naturally promotes exploration and smooths policy changes, even with more aggressive discounting. In practice, CB-MCTS can maintain strong performance across a broader range of $\gamma$ values, as it does not depend solely on reward averaging to guide search. This makes it better suited to deceptive environments, where recent rollouts are more informative and rapid adaptation is critical. However, excessively high $\varepsilon$ or $\alpha_{int}$ can cause CB-MCTS to over-explore, preventing it from following optimal tree branches to completion.

Lastly, to demonstrate how the suboptimal performance of Dec-MCTS can easily be exacerbated, we modify the D-chain problem such that at depth level $d$, the rewards for any action other than action 1 are reduced to $(D-d+1) / 2D$. Figures \ref{fig:dec_20_modified_supp} and \ref{fig:cb_20_modified_supp} present the results for Dec-MCTS and CB-MCTS in this modified scenario with two agents and a tree depth of $D=20$. The results are consistent with previous findings: only CB-MCTS successfully identifies the optimal joint policies, while Dec-MCTS remains trapped in a local optimum, even though the local optimum has been reduced by 30\%. This limitation could lead to more severe issues in practical applications, where an adversary could strategically place small rewards as decoys to deceive the agents.

\begin{figure*}[p]
	\begin{center}
        \includegraphics[width=\linewidth]{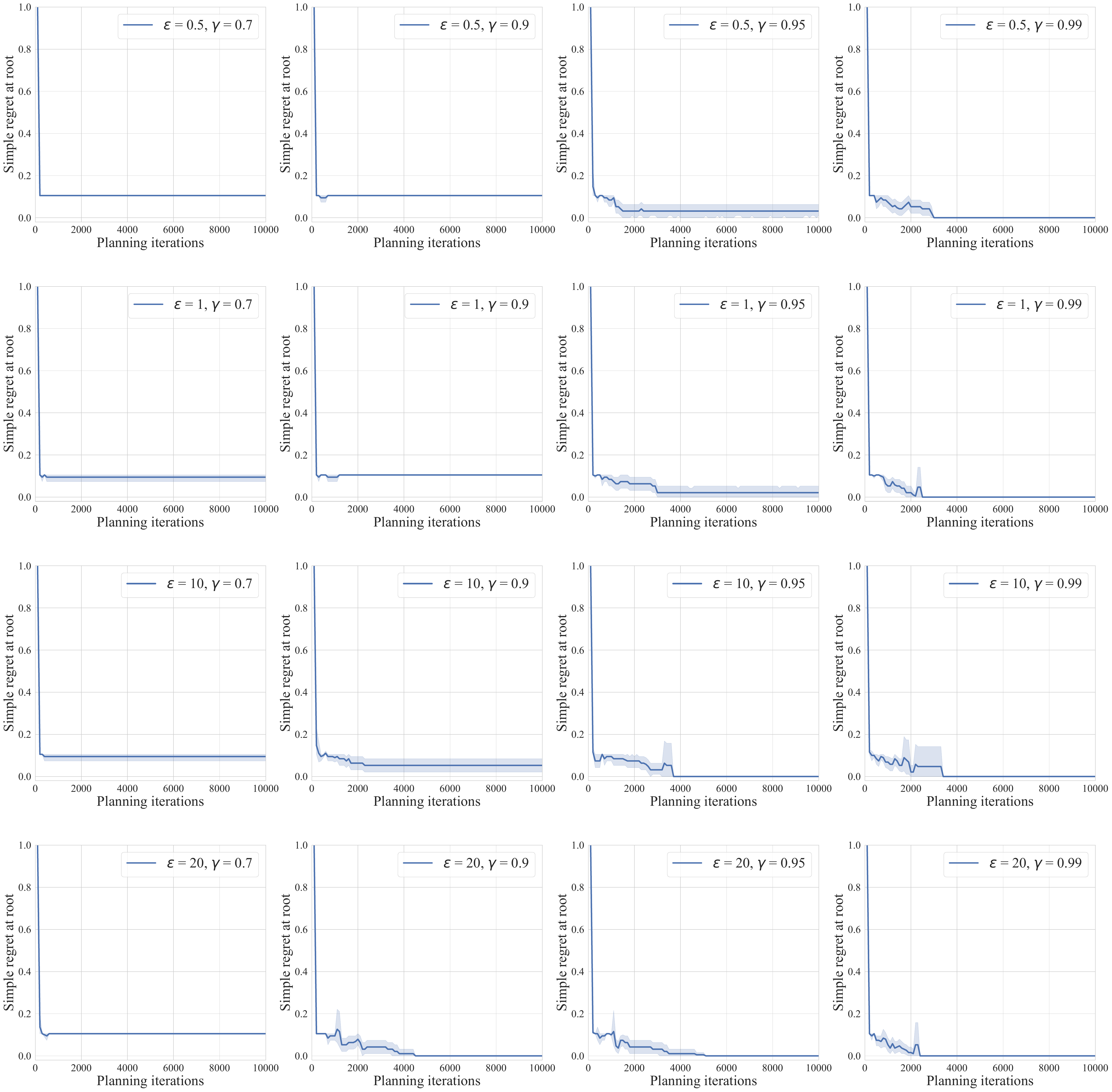}
		\caption{Results of Dec-MCTS on the D-chain problem with $D=10$ and 2 agents for varying exploration bias $\varepsilon$ and discount factor $\gamma$.
        \label{fig:dec_10_supp}\normalsize 
        }
	\end{center}
\end{figure*}

\begin{figure*}[p]
	\begin{center}
        \includegraphics[width=\linewidth]{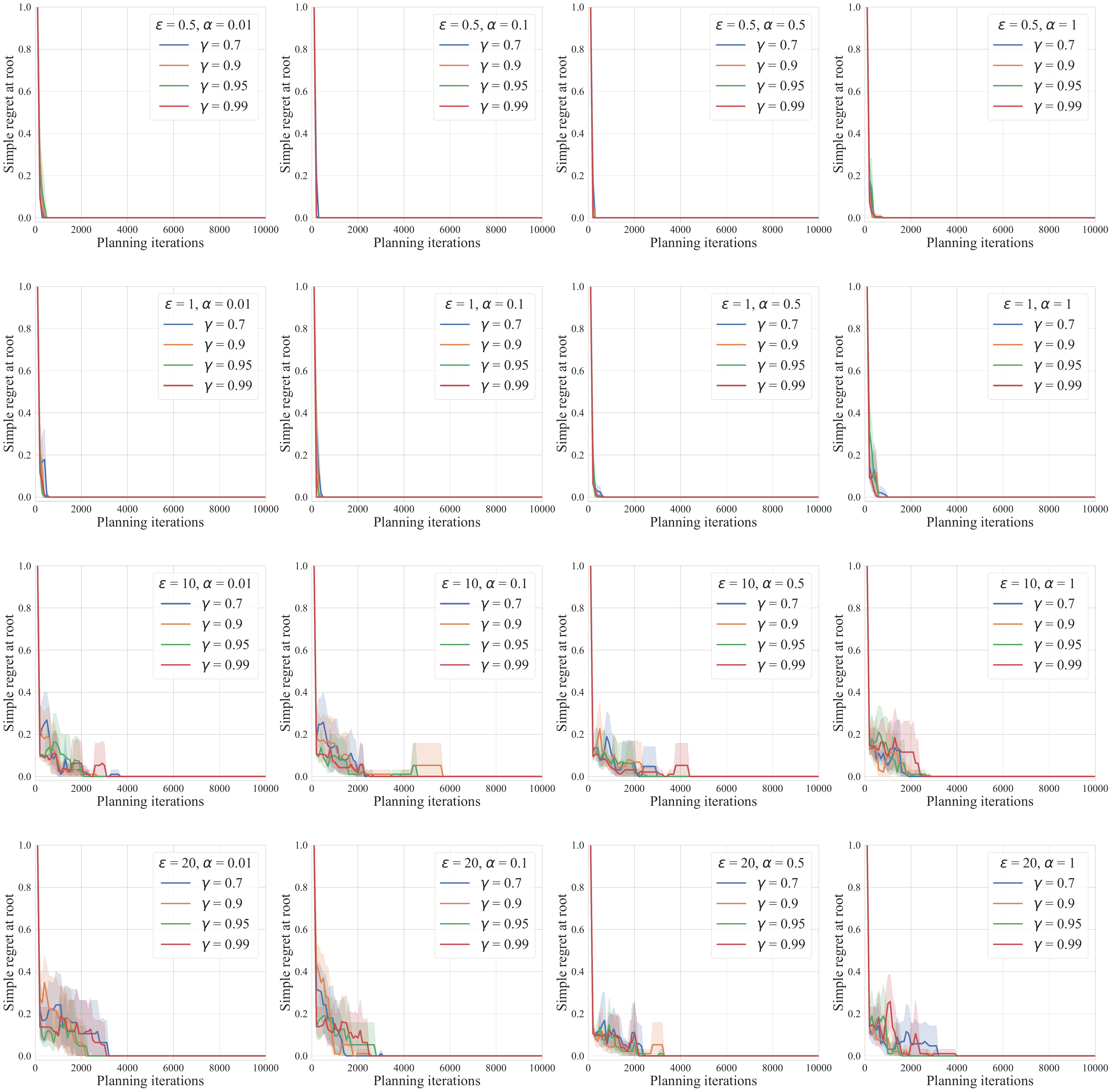}
		\caption{Results of CB-MCTS on the D-chain problem with $D=10$ and 2 agents for varying exploration bias $\varepsilon$, discount factor $\gamma$, and search temperature $\alpha$.
        \label{fig:cb_10_supp}\normalsize 
        }
	\end{center}
\end{figure*}

\begin{figure*}[p]
	\begin{center}
        \includegraphics[width=\linewidth]{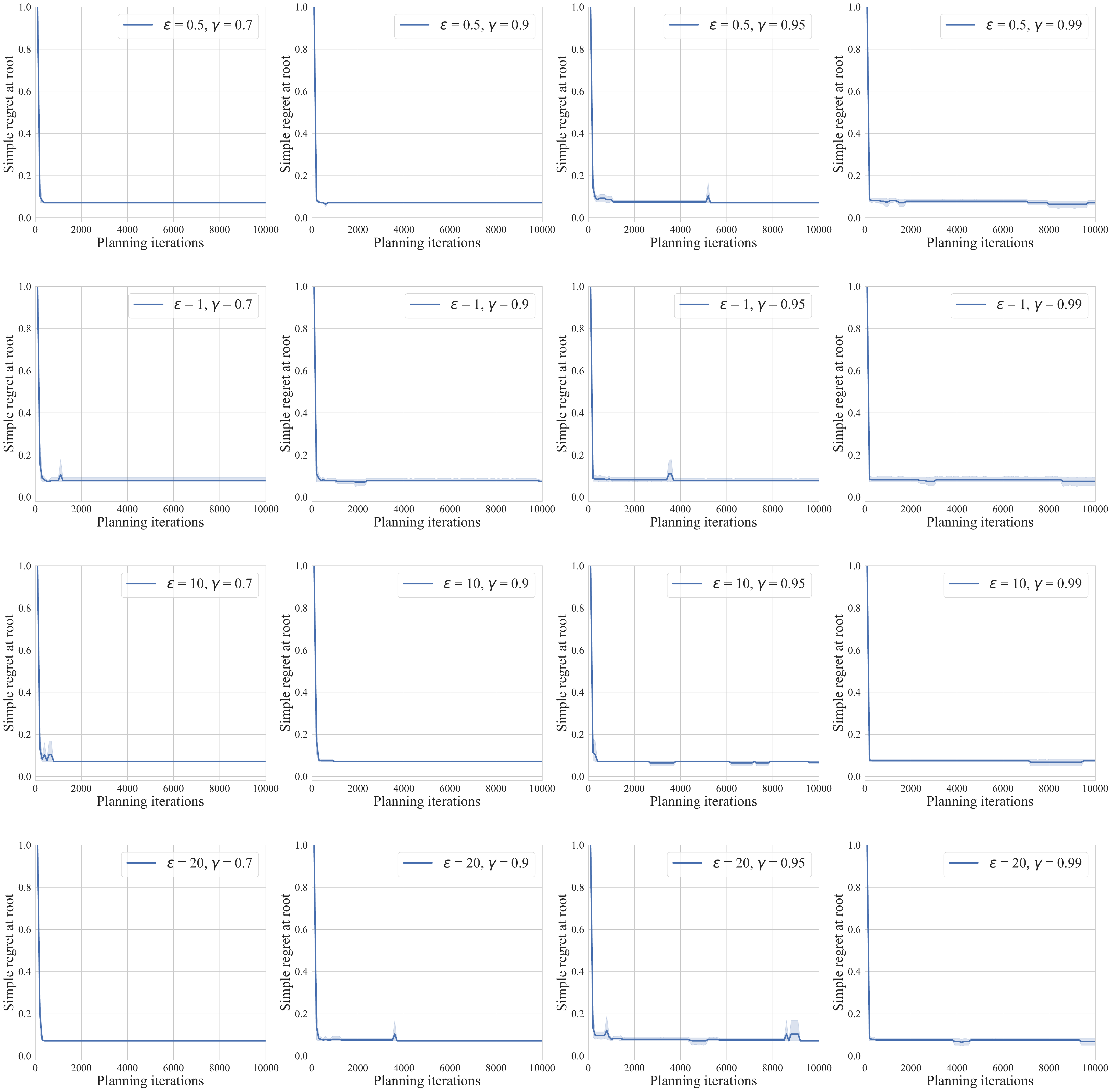}
		\caption{Results of Dec-MCTS on the D-chain problem with $D=10$ and 3 agents for varying exploration bias $\varepsilon$ and discount factor $\gamma$.
        \label{fig:dec_33_supp}\normalsize 
        }
	\end{center}
\end{figure*}

\begin{figure*}[p]
	\begin{center}
        \includegraphics[width=\linewidth]{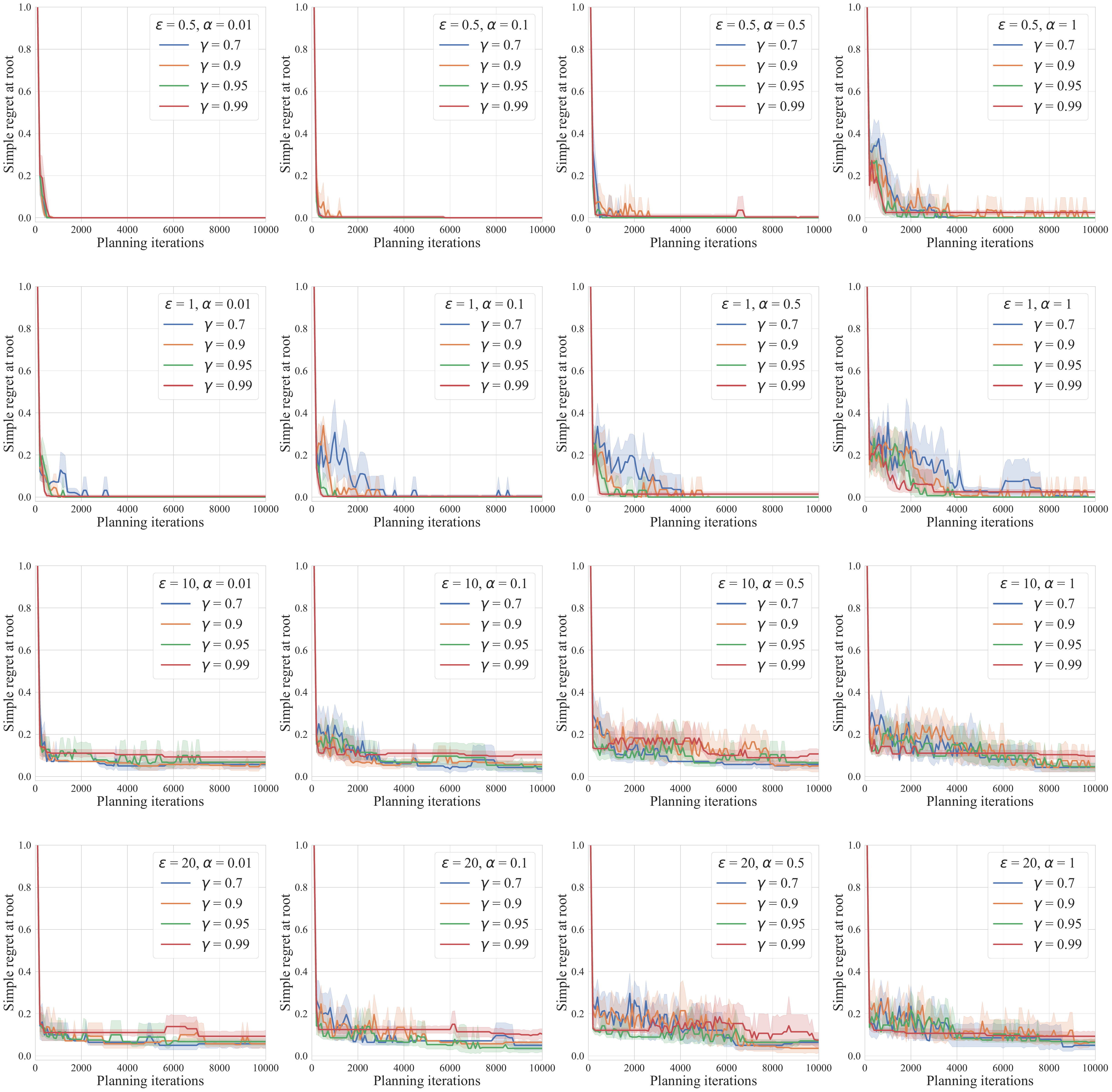}
		\caption{Results of CB-MCTS on the D-chain problem with $D=10$ and 3 agents for varying exploration bias $\varepsilon$ and discount factor $\gamma$, and search temperature $\alpha$.
        \label{fig:cb_33_supp}\normalsize 
        }
	\end{center}
\end{figure*}

\begin{figure*}[p]
	\begin{center}
        \includegraphics[width=\linewidth]{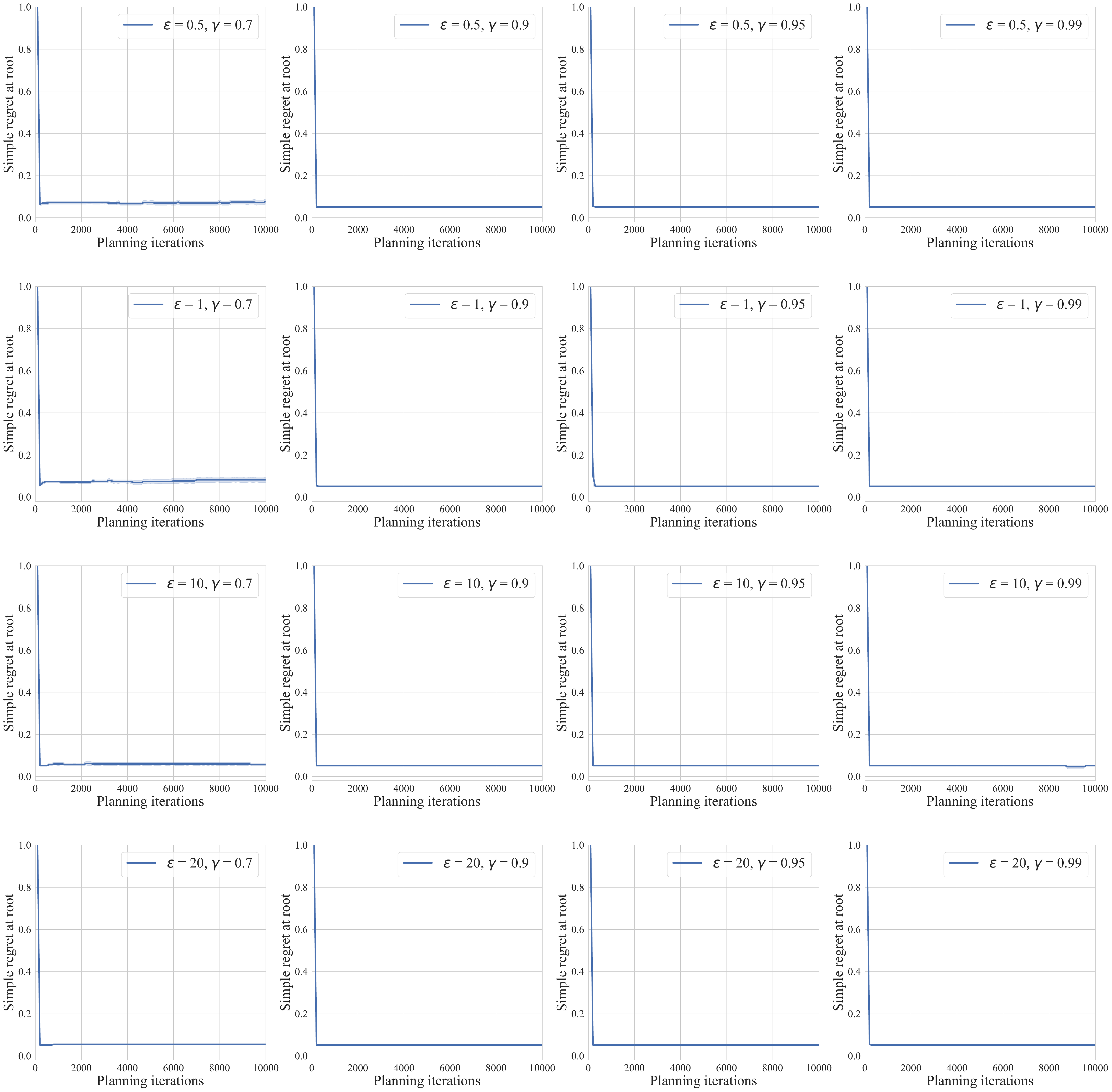}
		\caption{Results of Dec-MCTS on the D-chain problem with $D=20$ and 2 agents for varying exploration bias $\varepsilon$ and discount factor $\gamma$.
        \label{fig:dec_20_supp}\normalsize 
        }
	\end{center}
\end{figure*}

\begin{figure*}[p]
	\begin{center}
        \includegraphics[width=\linewidth]{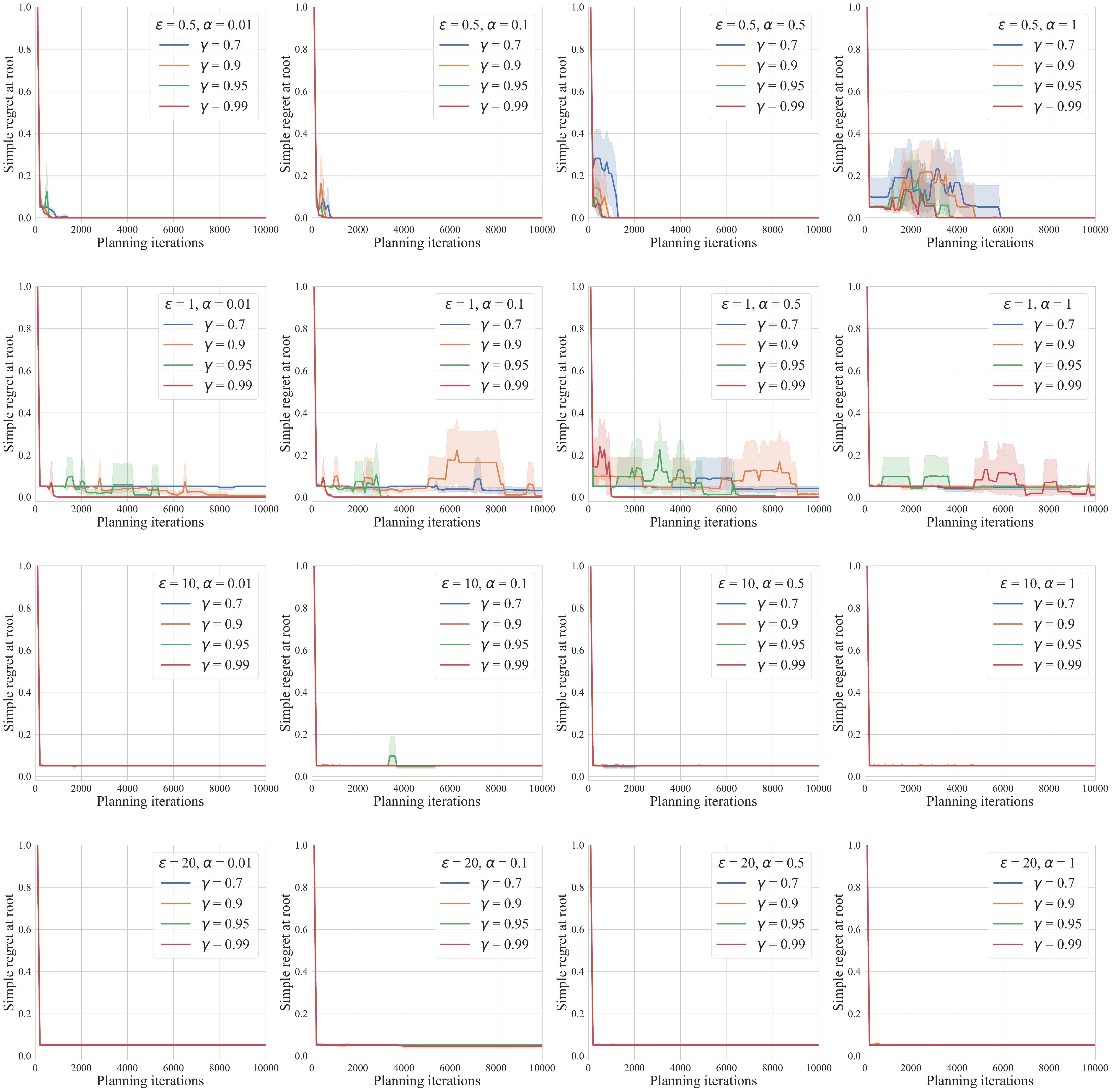}
		\caption{Results of CB-MCTS on the D-chain problem with $D=20$ and 2 agents for varying exploration bias $\varepsilon$ and discount factor $\gamma$, and search temperature $\alpha$.
        \label{fig:cb_20_supp}\normalsize 
        }
	\end{center}
\end{figure*}

\begin{figure*}[p]
	\begin{center}
        \includegraphics[width=\linewidth]{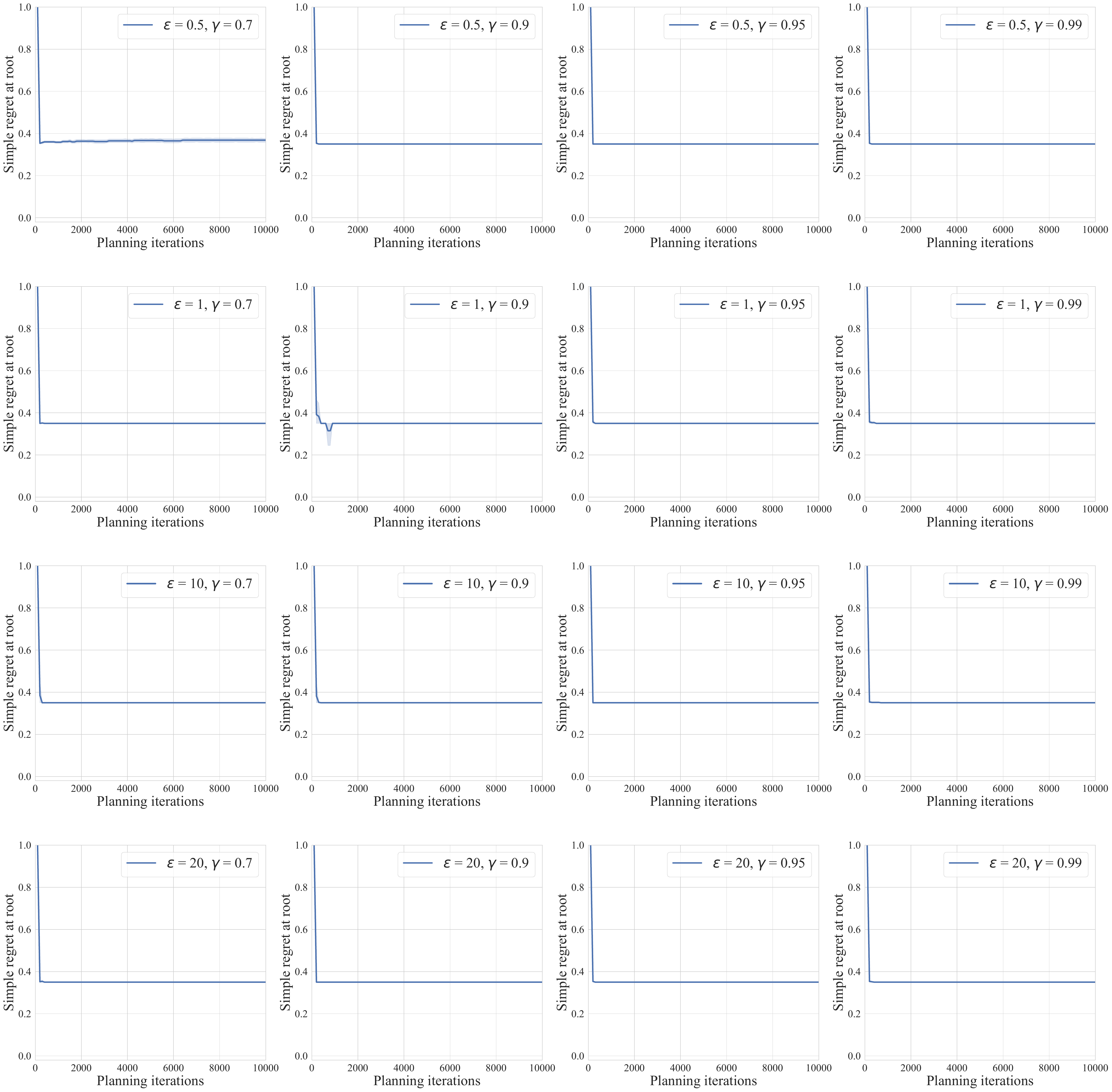}
		\caption{Results of Dec-MCTS on the \emph{modified} D-chain problem with $D=20$ and 2 agents for varying exploration bias $\varepsilon$ and discount factor $\gamma$.
        \label{fig:dec_20_modified_supp}\normalsize 
        }
	\end{center}
\end{figure*}

\begin{figure*}[p]
	\begin{center}
        \includegraphics[width=\linewidth]{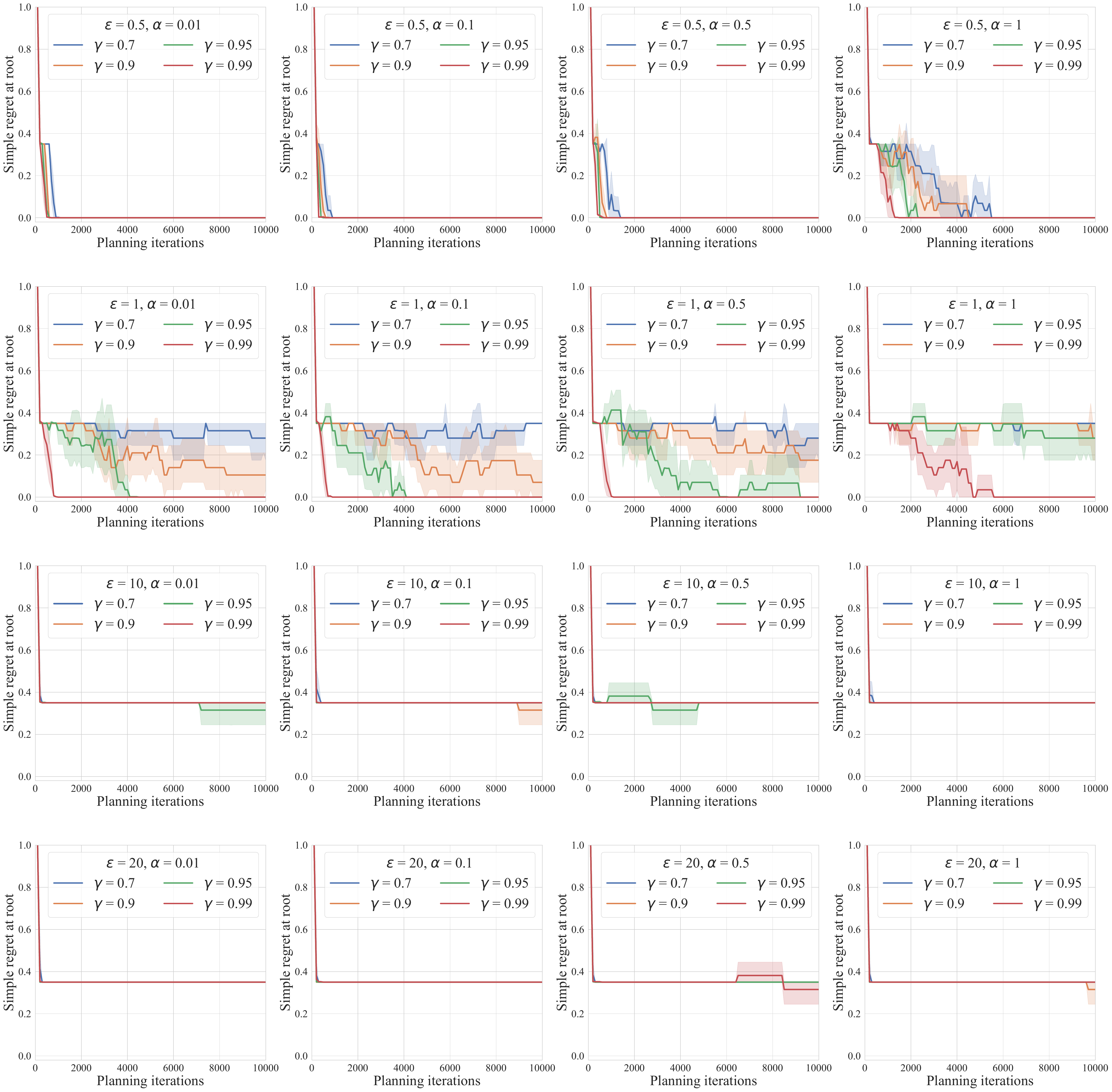}
		\caption{Results of CB-MCTS on the \emph{modified} D-chain problem with $D=20$ and 2 agents for varying exploration bias $\varepsilon$ and discount factor $\gamma$, and search temperature $\alpha$.
        \label{fig:cb_20_modified_supp}\normalsize 
        }
	\end{center}
\end{figure*}
\FloatBarrier  

\subsubsection{Frozen Lake Environment \& Parameters Details}

Figure \ref{fig:frozenlake_descs} presents the 8 $\times$ 12 Frozen Lake environments used to test the algorithms in the main paper. In each environment, \emph{red square S} is the starting location for the agents, \emph{white squares F} represent frozen plates that agents can walk on, \emph{blue squares H} are holes that end agents' trials, and \emph{yellow squares G} are the goal locations. Each environment is randomly generated, with each square having a $20\%$ probability of being a hole. All environments are verified to have a valid path to each goal using the Depth First Search algorithm.

We conduct a hyperparameter search to select the parameters for the experiments in the main paper. The search is carried out on the environment shown in Figure \ref{fig:frozenlake_descs}a, while the results presented in the main paper were averaged over the four environments. For all algorithms, we consider all combinations of the following parameters:
\begin{itemize}
    \item Exploration bias $\varepsilon$: 0.5, 1, 10, 100;
    \item Discount factor $\gamma$: 0.6, 0.7, 0.9, 0.99;
    \item Initial search temperature $\alpha_{init}$: 0.01, 0.1, 1, 10.
\end{itemize}

Table \ref{tab:frozenlake} provides the selected combinations for the experiments in the main paper, which achieved the best cumulative joint score performances. To better assess the contribution of each component in CB-MCTS, we evaluated the three ablated variants (CB-MCTS-Global, NE-MCTS, and Independent) with parameter settings identical to those of the full CB-MCTS algorithm.

\begin{table}[h]
\normalsize
\caption{Parameters used for the experiments in the main paper.}
\label{tab:frozenlake}
\centering
\begin{tabular}{lccc@{}}
\toprule 
\textbf{Algorithm}      & $\varepsilon$     & $\gamma$    & $\alpha_{init}$ \\
\midrule
Dec-MCTS                & 100       & 0.99        & - \\
CB-MCTS                 & 0.5       & 0.9         & 1           \\
CAR-DENTS               & 0.5       & -           & 1   \\
\bottomrule
\end{tabular}
\end{table}

\begin{figure}
    \begin{center}
        \includegraphics[width=0.75\linewidth]{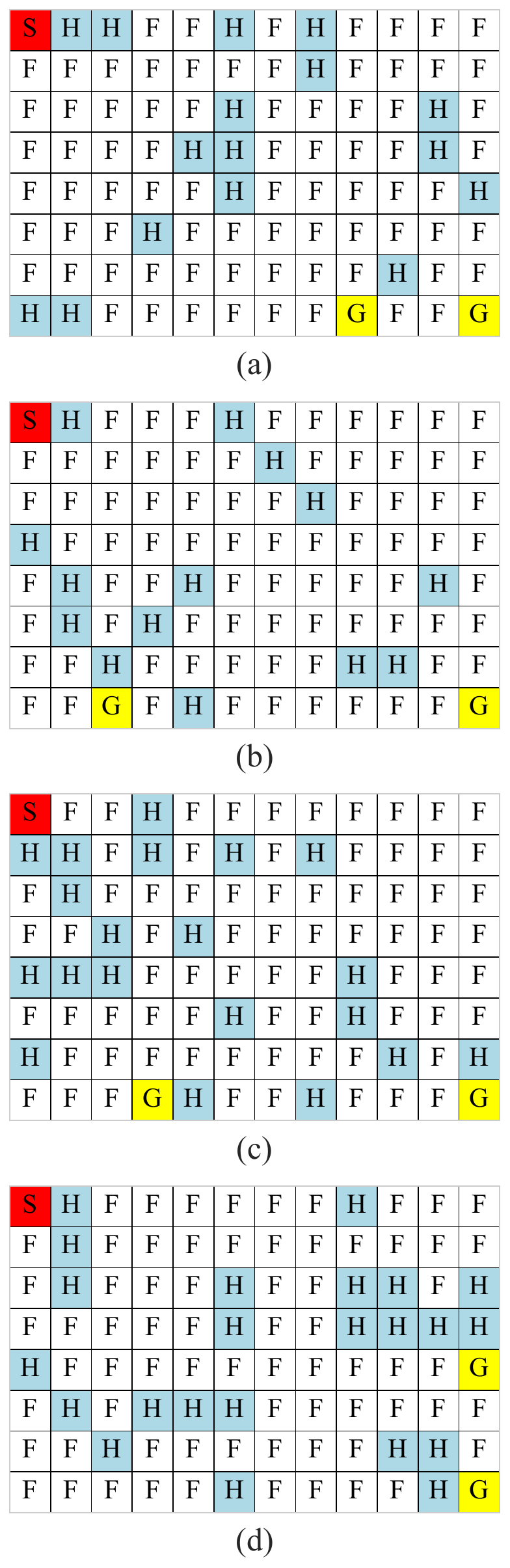}
        \caption{Environment maps used for the Frozen Lake experiments.}
        \label{fig:frozenlake_descs}
    \end{center}
\end{figure}

\begin{figure*}[t]
	\begin{center}
        \includegraphics[width=0.8\linewidth]{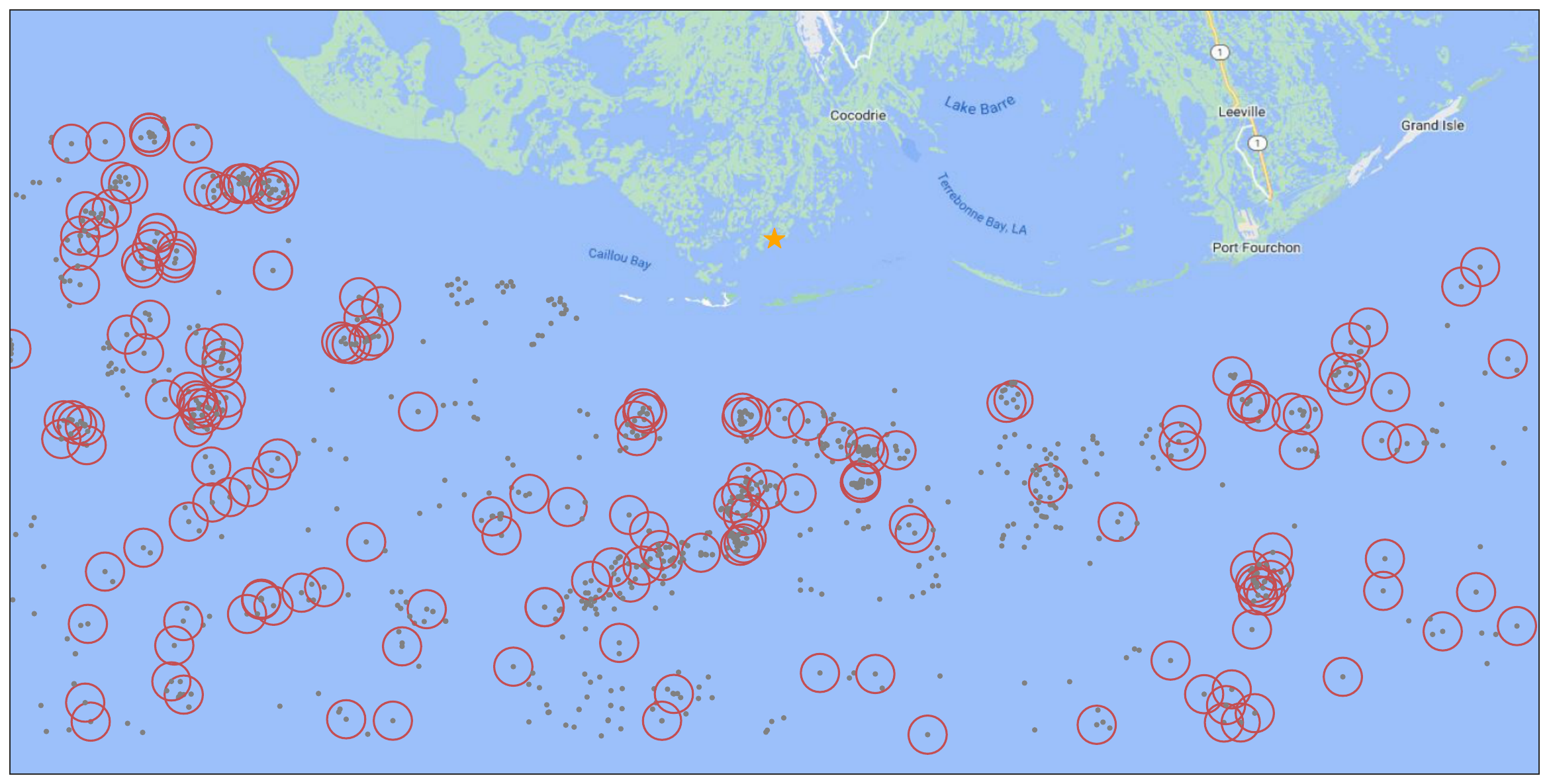}
		\caption{Environment used for parameter search for the Oil Rigs Inspection problem. \emph{Yellow star} is the starting location of all agents, \emph{grey dots} are the locations of the oil rigs, and \emph{red circles} are observation ranges of the oil rigs requiring inspection.
        \label{fig:oilrig_0_supp}\normalsize 
        }
	\end{center}
\end{figure*}
\FloatBarrier
\subsubsection{Oil Rig Environment \& Parameters Details}

The motion graph for the agents in the main paper is constructed using a probabilistic roadmap with a Dubins path model \cite{kavraki1996probabilistic}. This model employs curves to refine the straight-line segments connecting waypoints and is extensively utilized for representing motion constraints pertinent to vehicle-like nonholonomic robots such as Autonomous Underwater Vehicles \cite{41470}. The graph has 2000 vertices and 81626 edges. The ground truth locations of the 1000 oil rigs are obtained from the Bureau of Ocean Energy Management\footnote{Available at: https://www.data.boem.gov.}. Four subsets of 200 oil rigs were randomly selected for the experiments in the main paper.

We perform a hyperparameter search to select the optimal parameters for the experiments presented in the main paper. The search is performed on the environment shown in Figure \ref{fig:oilrig_0_supp} with 3 agents, a travel budget of 200 km, and a planning time of 100 iterations. The results presented in the main paper are averaged over all four different environments. For all algorithms, we consider all combinations of the following parameters:

\begin{itemize}
    \item Exploration bias $\varepsilon$: 0.5, 1, 10, 100;
    \item Discount factor $\gamma$: 0.6, 0.7, 0.8, 0.9;
    \item Initial search temperature $\alpha_{init}$: 0.01, 0.1, 1, 10.
\end{itemize}

Table \ref{tab: oilrig} gives the combinations chosen for the experiments in the main paper, which have the best median joint score performances. To better assess the contribution of each component in CB-MCTS, we evaluated the three ablated variants (CB-MCTS-Global, NE-MCTS, and Independent) with parameter settings identical to those of the full CB-MCTS algorithm.

\begin{figure}[ht]
	\begin{center}
        \includegraphics[width=0.85\linewidth]{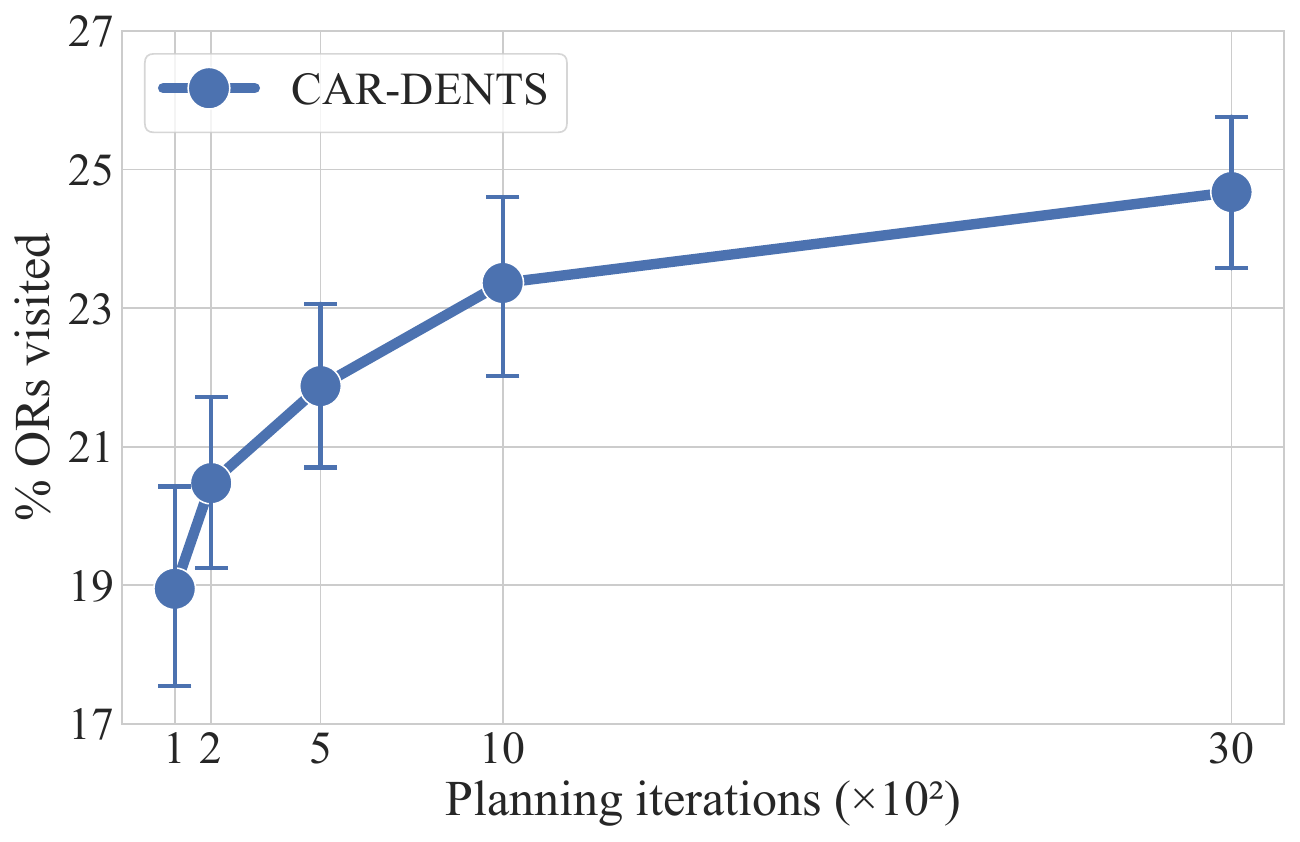}
		\caption{Performance of CAR-DENTS across different numbers of planning iterations. Results are with a $95\%$ confidence interval.
        \label{fig:oilrig_car_planning}\normalsize 
        }
	\end{center}
\end{figure}

\begin{table}[!ht]
    \normalsize
    \caption{Parameters used for the experiments in the main paper.}
    \label{tab: oilrig}
    \centering
    \begin{tabular}{lccc@{}}
        \toprule 
        \textbf{Algorithm}      & $\varepsilon$     & $\gamma$    & $\alpha_{init}$ \\
        \midrule
        Dec-MCTS                & 100       & 0.6         & - \\
        CB-MCTS                 & 0.5       & 0.8         & 0.01           \\
        CAR-DENTS               & 0.5       & -           & 10           \\
        \bottomrule
    \end{tabular}
\end{table}

After parameter tuning, we evaluate the CAR-DENTS algorithm across four environments using four agents, a 200 km travel budget, and varying numbers of planning iterations. As shown in Figure \ref{fig:oilrig_car_planning}, increasing the number of iterations initially improves performance but eventually yields diminishing returns. We therefore select 3000 iterations for the experiments in the main paper to ensure strong performance while maintaining runtime practicality.

\clearpage

\subsubsection{Search Temperature Decaying Rate Analysis}

In this section, we examine the effect of the decaying function $\alpha(m)$ on the performance of our proposed CB-MCTS algorithm to understand how the rate of search temperature decay influences exploration. To this end, we design an ablated variant \textbf{CB-MCTS with Fast-Decaying Alpha (FA-MCTS)}, which sets the temperature decay function to

\begin{equation}
    \alpha(m) = e^{-m / \left( \frac{1}{1-\gamma} - m \right)}, \nonumber
\end{equation}

\noindent while keeping the original entropy coefficient

\begin{equation}
    \beta(m) = 1 / \log(e + m). \nonumber    
\end{equation}

This formulation ensures that $\alpha(m)$ rapidly approaches zero as $m$ approaches the upper bound $1/(1 - \gamma)$, corresponding to the maximum effective discounted number of visits to a node.

The motivation for this design arises from the gap between theoretical assumptions and practical constraints. Our convergence analysis requires $\alpha(m) \to 0$ as $m \to \infty$ to guarantee sufficient exploitation in the limit. In practice, however, the discount factor $\gamma$ imposes an upper bound of $1/(1-\gamma)$ on the effective number of node visits, so the original $\alpha(m)$ function never fully decays to zero. FA-MCTS addresses this by enforcing faster temperature decay within the bounded planning horizon, allowing us to investigate the trade-offs between exploration persistence, convergence behavior, and policy robustness.

\noindent\textbf{Simple Regret in D-chain Problem} We first analyze the simple regret at the root node using the D-Chain problem. In this section, we also include the ablated version with no entropy NE-MCTS. We evaluate both FA-MCTS and NE-MCTS across four different multi-agent configurations of the D-chain problem and under the same parameter combinations as in the main experiments.

Our results show that NE-MCTS, which removes entropy regularization, is still capable of discovering joint optimal policies in the baseline setting with 2 agents and tree depth $D = 10$, particularly with a sufficiently high exploration bias $\varepsilon$ or search temperature $\alpha$ (Figure~\ref{fig:noentropy_10_supp}). However, its performance worsens in more challenging settings with greater depth or more agents (Figures~\ref{fig:noentropy_33_supp}, \ref{fig:noentropy_20_supp}, \ref{fig:noentropy_20_modified_supp}). Without entropy to maintain stochasticity, the algorithm fails to sufficiently explore the full chain, often missing the globally optimal path.

In contrast, FA-MCTS, which enforces a faster decay of the search temperature $\alpha(m)$, maintains strong performance and successfully recovers the optimal joint strategy in nearly all evaluated scenarios (Figures~\ref{fig:decaying_10_supp}, \ref{fig:decaying_33_supp}, \ref{fig:decaying_20_supp}, \ref{fig:decaying_20_modified_supp}). Nevertheless, we observe that excessively high values of $\varepsilon$ or initial temperature $\alpha_{init}$ can lead to over-exploration early in the search, especially when paired with fast decay, causing the algorithm to fail to consistently commit to optimal branches.

These findings highlight two key insights. First, entropy regularization is crucial for sustained exploration in deceptive or skewed-reward environments, enabling the agent to reach and recognize optimal rewards. Second, the algorithm remains robust to variations in the temperature decay rate, provided that the initial search temperature and exploration bias are appropriately calibrated.

\begin{figure*}[p]
	\begin{center}
        \includegraphics[width=\linewidth]{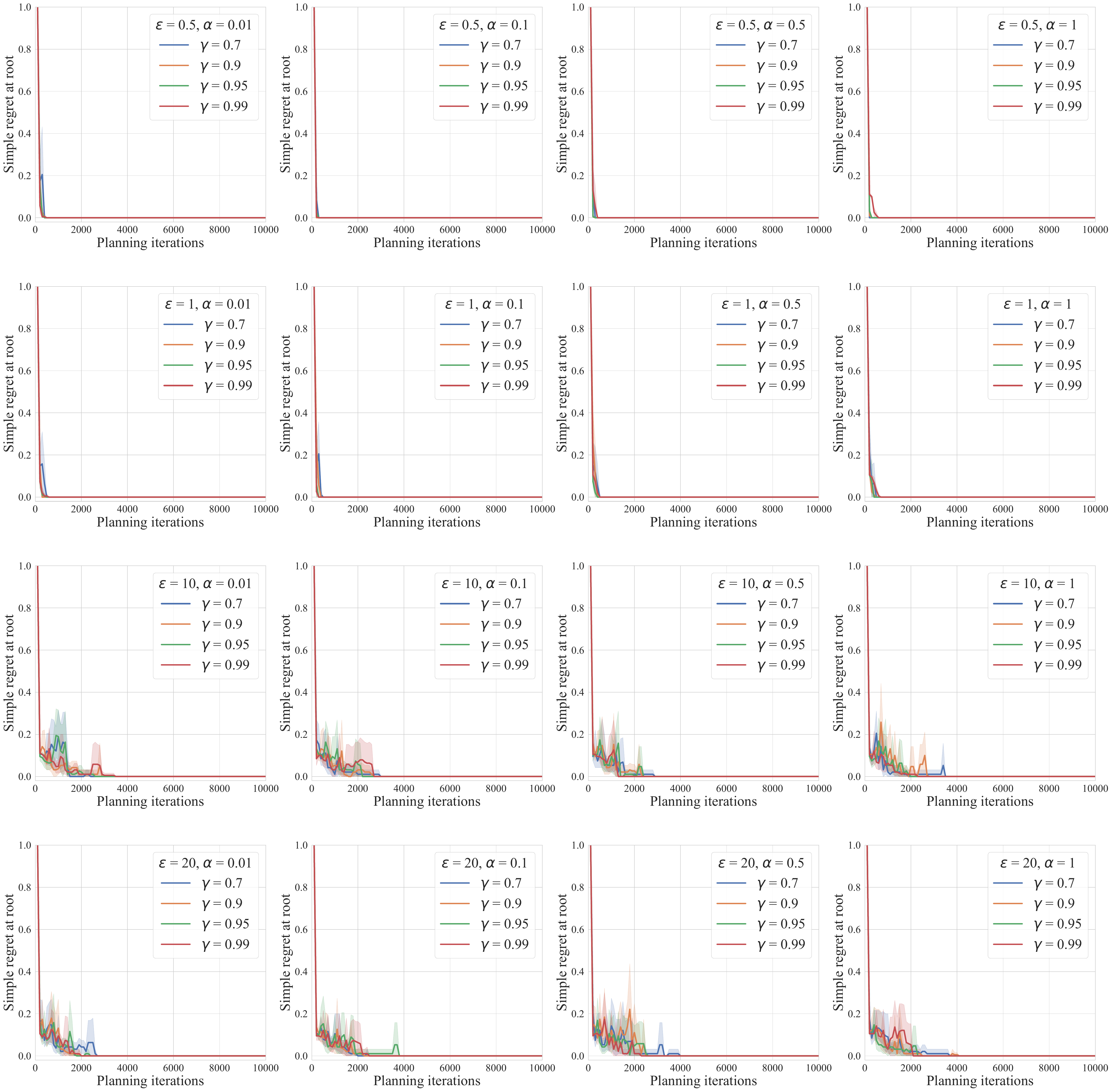}
		\caption{Results of FA-MCTS on the D-chain problem with $D=10$ and 2 agents for varying exploration bias $\varepsilon$ and discount factor $\gamma$.
        \label{fig:decaying_10_supp}\normalsize 
        }
	\end{center}
\end{figure*}
\clearpage
\begin{figure*}[p]
	\begin{center}
        \includegraphics[width=\linewidth]{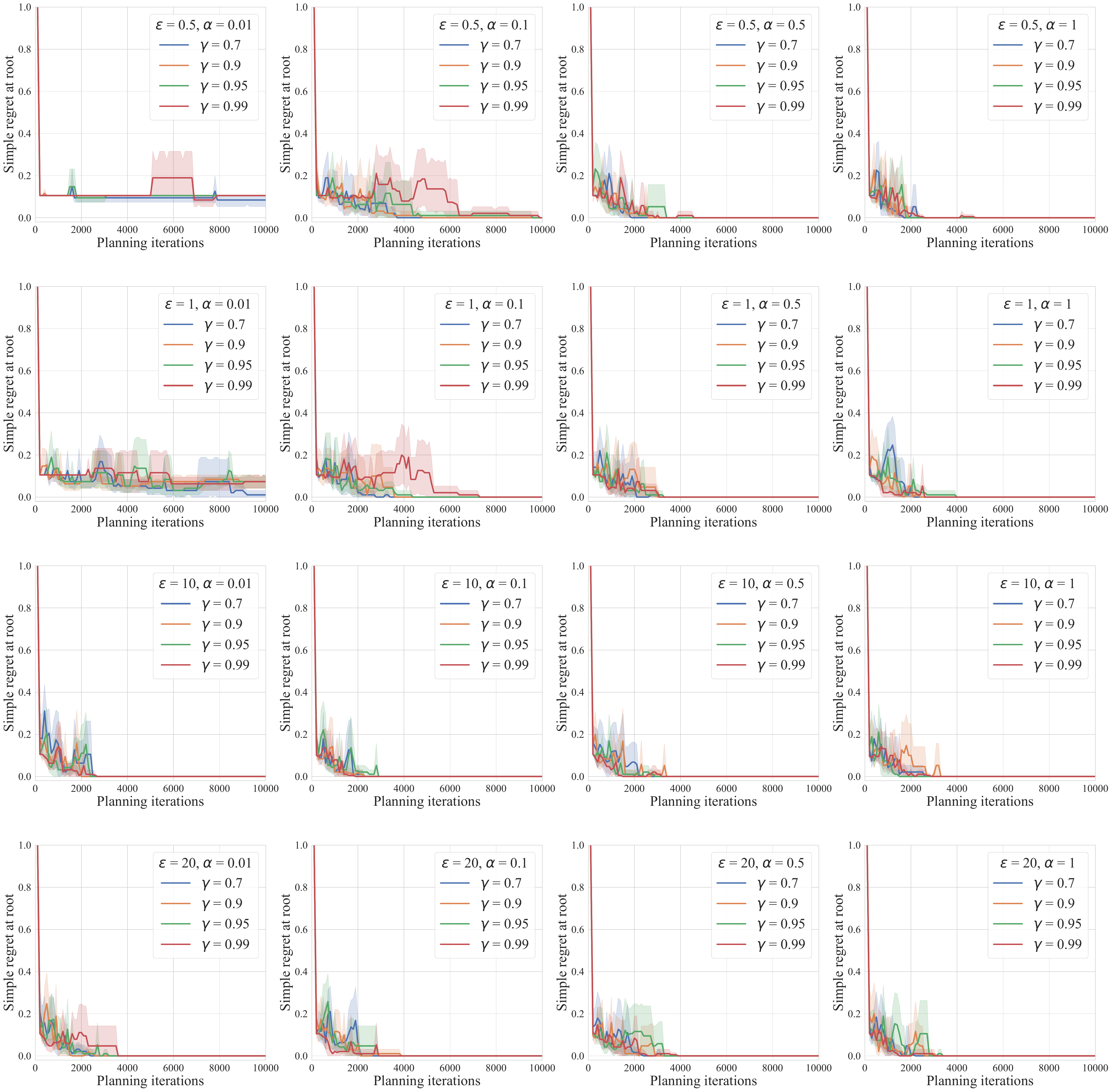}
		\caption{Results of NE-MCTS on the D-chain problem with $D=10$ and 2 agents for varying exploration bias $\varepsilon$, discount factor $\gamma$, and search temperature $\alpha$.
        \label{fig:noentropy_10_supp}\normalsize 
        }
	\end{center}
\end{figure*}
\clearpage
\begin{figure*}[p]
	\begin{center}
        \includegraphics[width=\linewidth]{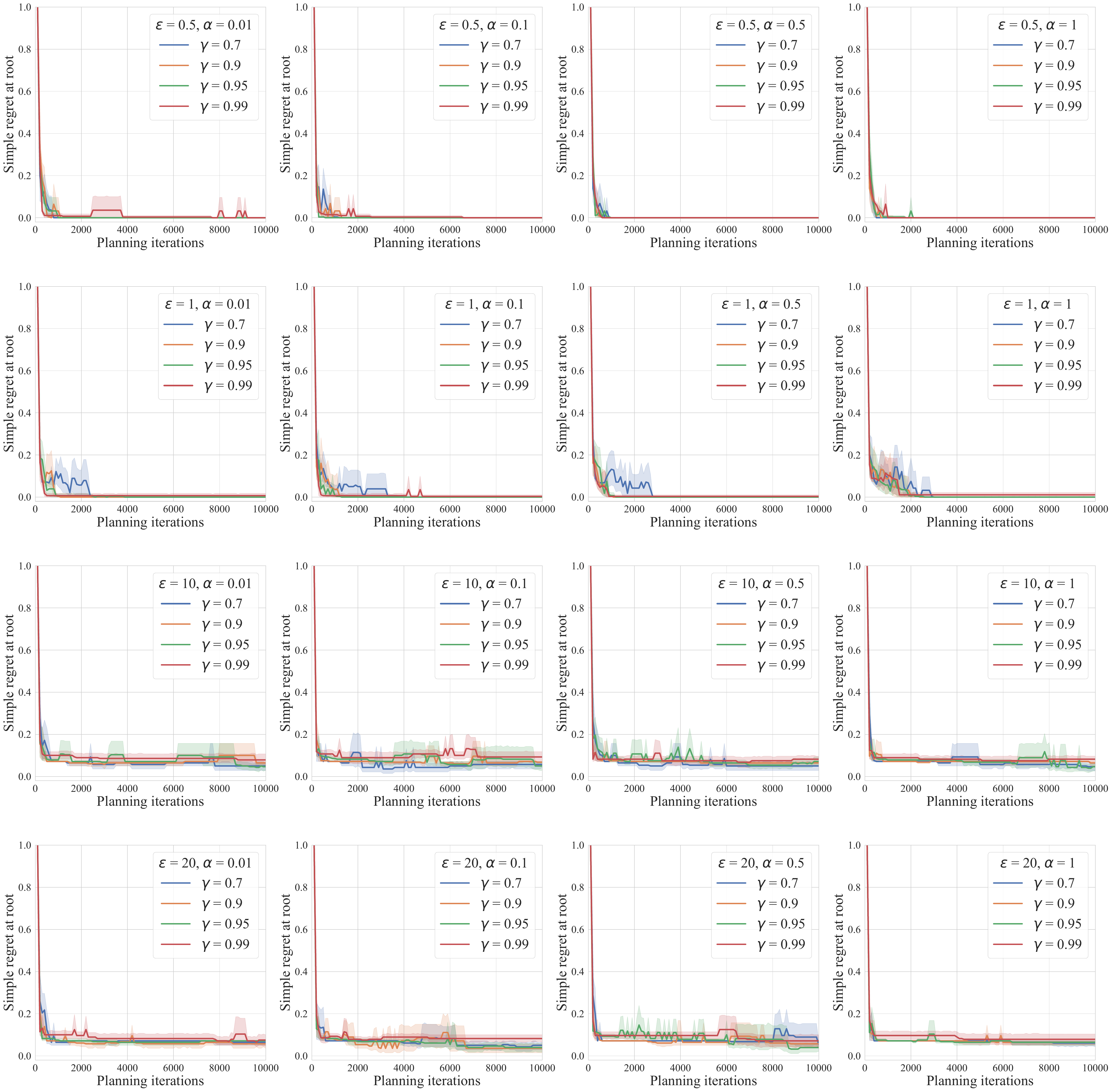}
		\caption{Results of FA-MCTS on the D-chain problem with $D=10$ and 3 agents for varying exploration bias $\varepsilon$ and discount factor $\gamma$.
        \label{fig:decaying_33_supp}\normalsize 
        }
	\end{center}
\end{figure*}
\clearpage
\begin{figure*}[p]
	\begin{center}
        \includegraphics[width=\linewidth]{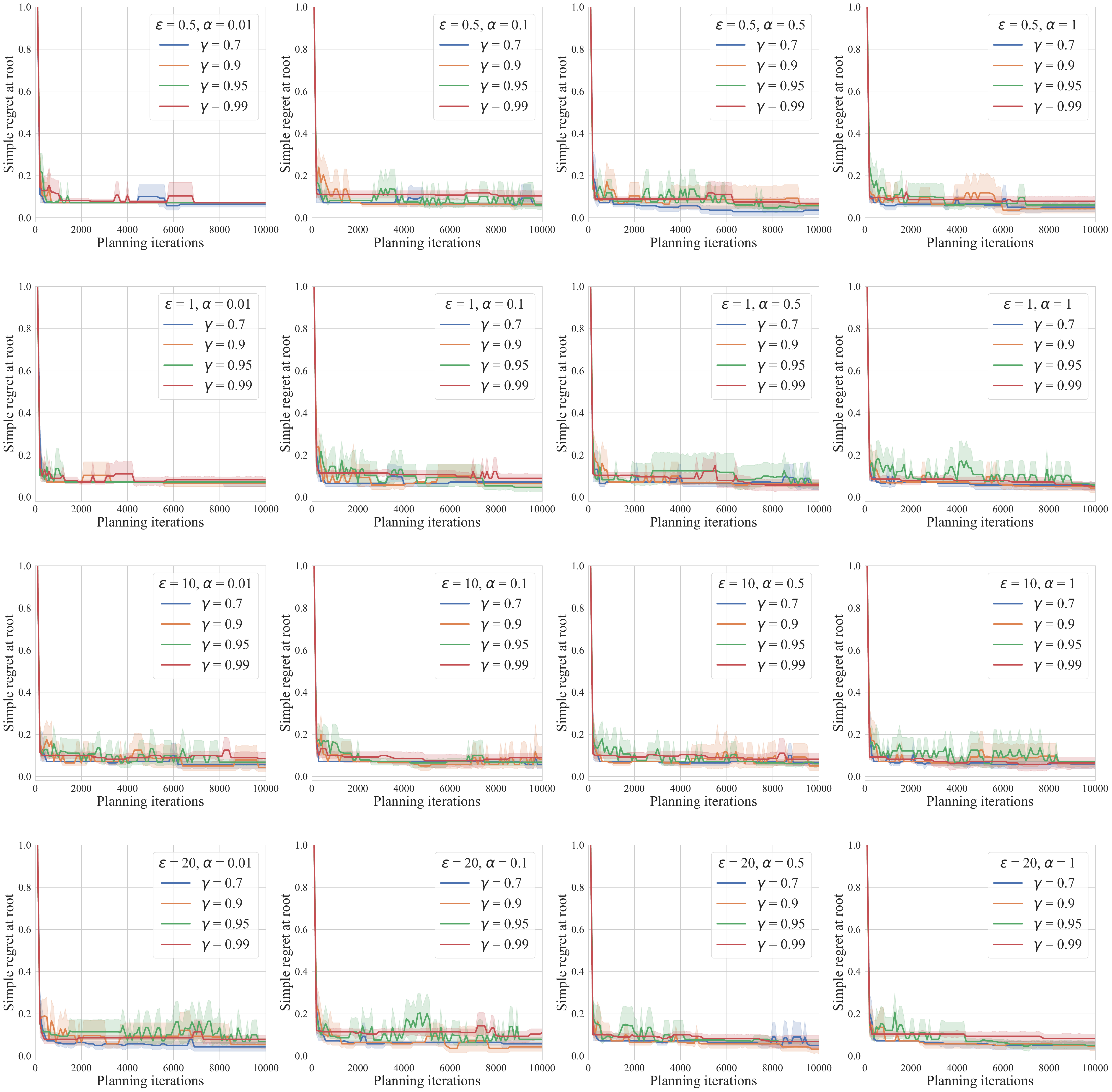}
		\caption{Results of NE-MCTS on the D-chain problem with $D=10$ and 3 agents for varying exploration bias $\varepsilon$ and discount factor $\gamma$, and search temperature $\alpha$.
        \label{fig:noentropy_33_supp}\normalsize 
        }
	\end{center}
\end{figure*}
\clearpage
\begin{figure*}[p]
	\begin{center}
        \includegraphics[width=\linewidth]{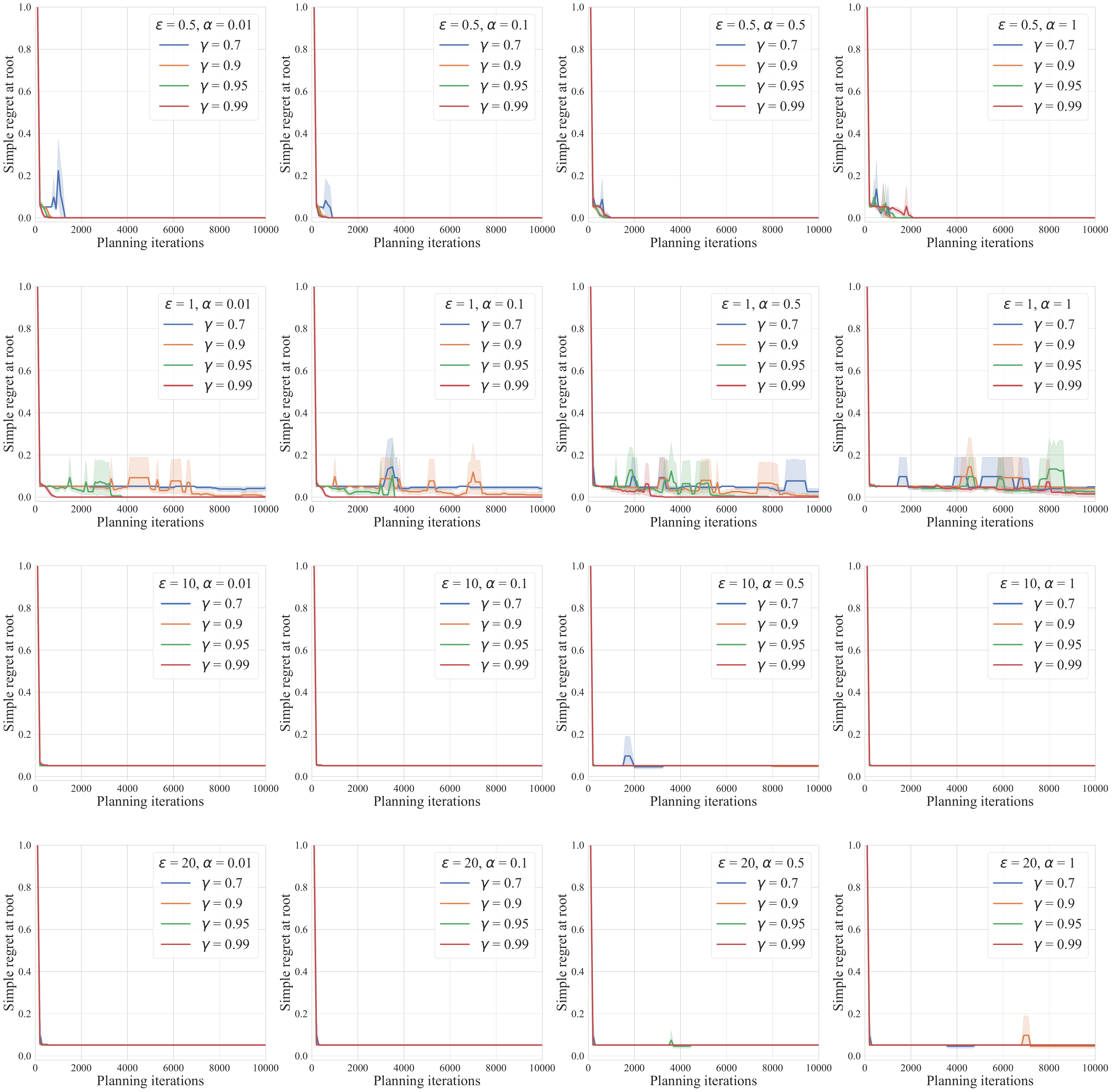}
		\caption{Results of FA-MCTS on the D-chain problem with $D=20$ and 2 agents for varying exploration bias $\varepsilon$ and discount factor $\gamma$.
        \label{fig:decaying_20_supp}\normalsize 
        }
	\end{center}
\end{figure*}
\clearpage
\begin{figure*}[p]
	\begin{center}
        \includegraphics[width=\linewidth]{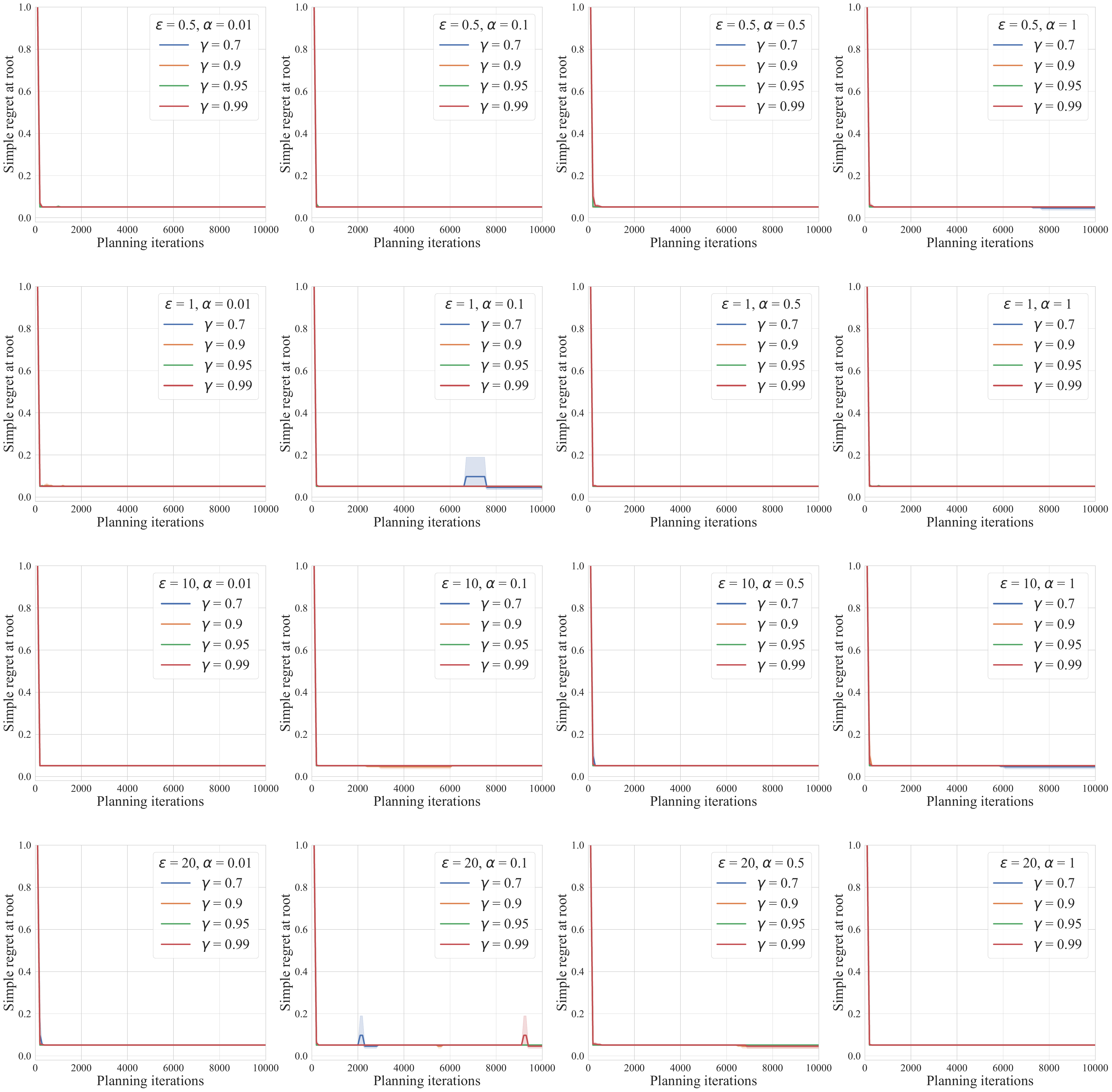}
		\caption{Results of NE-MCTS on the D-chain problem with $D=20$ and 2 agents for varying exploration bias $\varepsilon$ and discount factor $\gamma$, and search temperature $\alpha$.
        \label{fig:noentropy_20_supp}\normalsize 
        }
	\end{center}
\end{figure*}
\clearpage
\begin{figure*}[p]
	\begin{center}
        \includegraphics[width=\linewidth]{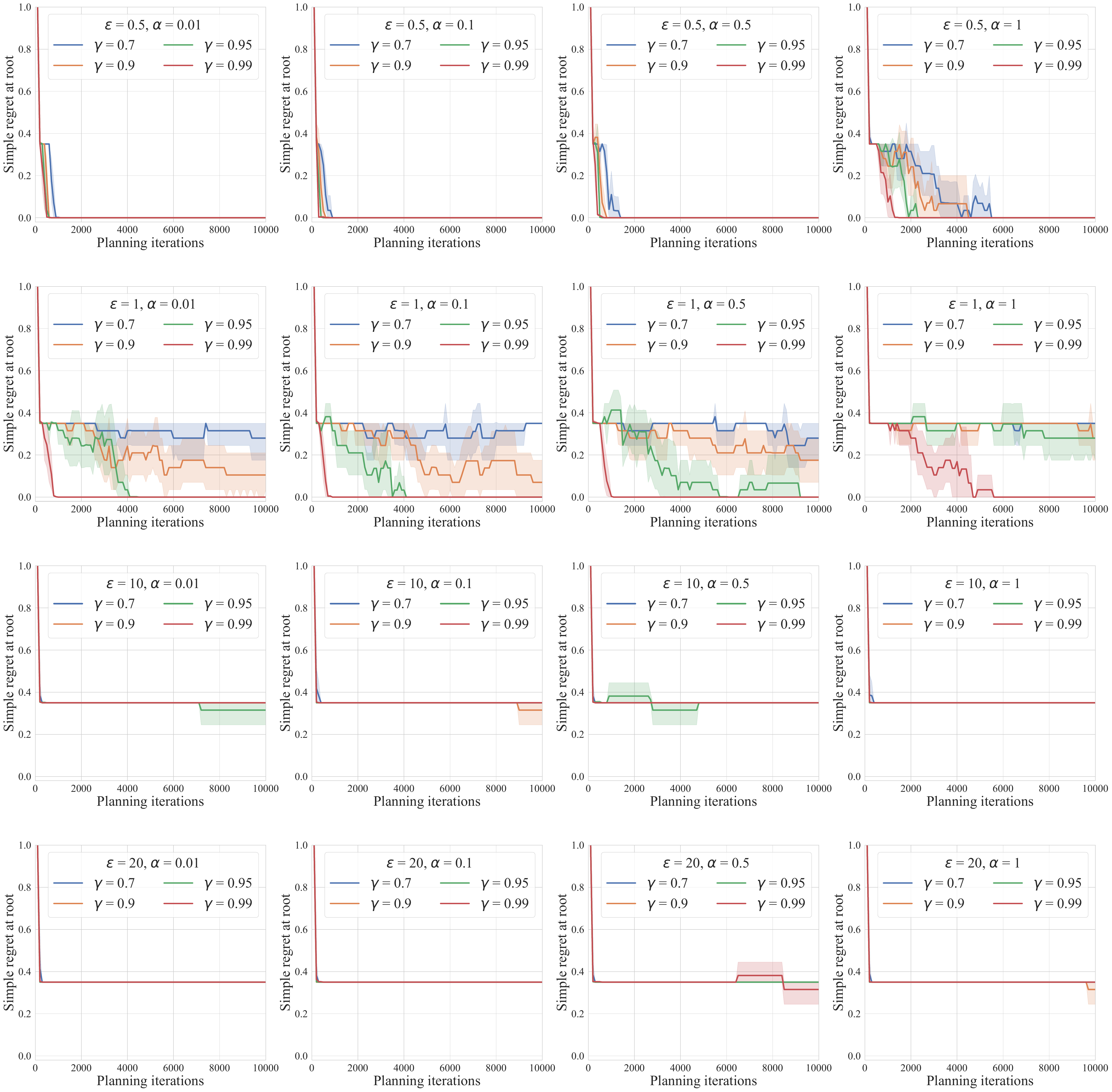}
		\caption{Results of FA-MCTS on the \emph{modified} D-chain problem with $D=20$ and 2 agents for varying exploration bias $\varepsilon$ and discount factor $\gamma$.
        \label{fig:decaying_20_modified_supp}\normalsize 
        }
	\end{center}
\end{figure*}
\clearpage
\begin{figure*}[p]
	\begin{center}
        \includegraphics[width=\linewidth]{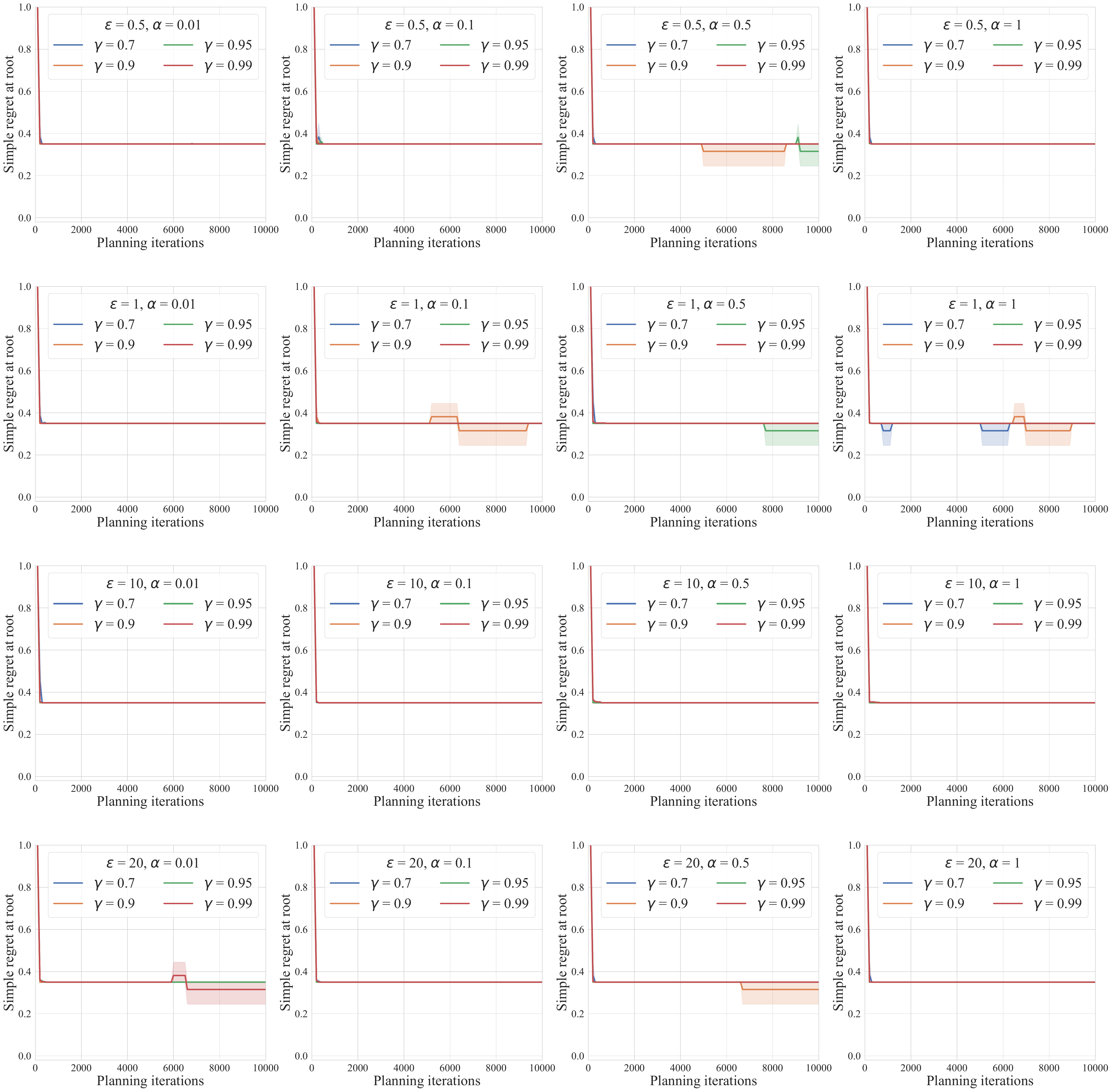}
		\caption{Results of NE-MCTS on the \emph{modified} D-chain problem with $D=20$ and 2 agents for varying exploration bias $\varepsilon$ and discount factor $\gamma$, and search temperature $\alpha$.
        \label{fig:noentropy_20_modified_supp}\normalsize 
        }
	\end{center}
\end{figure*}
\clearpage
\noindent\textbf{Frozen Lake Problem} We repeated the same experiments and hyperparameter search procedure as in the main paper to determine the optimal configurations for FA-MCTS. The selected parameter combinations, corresponding to those that achieved the highest cumulative joint score, are $\varepsilon = 0.5$, $\gamma = 0.9$, and $\alpha_{init} = 1$.

Figure~\ref{fig:frozenlake_supp} presents the performance of FA-MCTS in the Frozen Lake environment, alongside CB-MCTS and Dec-MCTS for reference. As expected, the faster decay rate reduces exploration, resulting in a $10\%$ decrease in the probability of reaching both goals compared with CB-MCTS. Nevertheless, FA-MCTS still achieves joint scores that closely match those of CB-MCTS and substantially outperform Dec-MCTS, highlighting its robustness even under more aggressive decay dynamics.

\begin{figure}[hb]
	\begin{center}
        \includegraphics[width=0.625\linewidth]{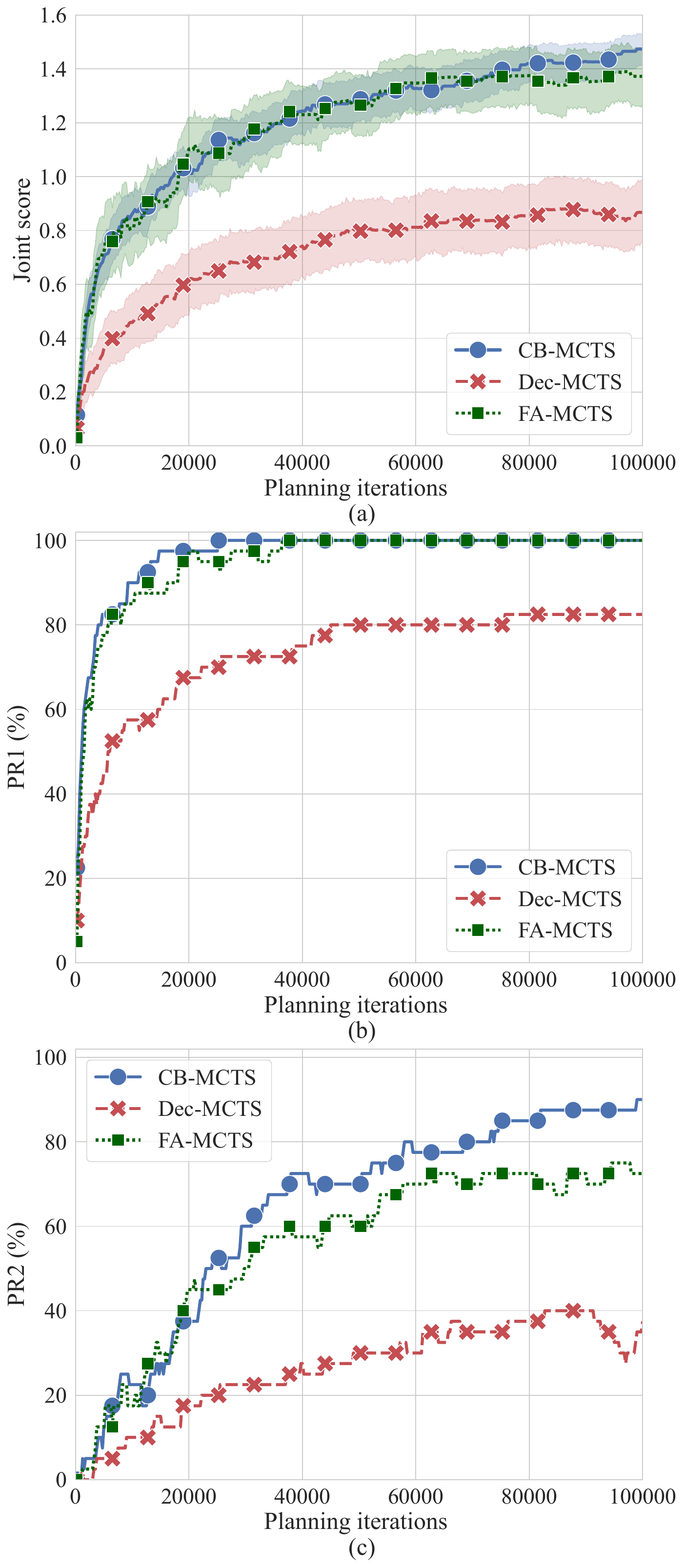}
		\caption{Performances of different algorithms in the Frozen Lake problem: Joint scores (a), Probability that at least one goal is reached by any agent (b), and Probability that both goals are reached (c). Results are with $95\%$ confidence interval.\label{fig:frozenlake_supp}\normalsize}
	\end{center}
\end{figure}

\noindent\textbf{Oil Rigs Inspection Problem} We follow the same experimental procedure as in the main paper, using the identical parameter settings of the full algorithm for FA-MCTS without additional tuning. Figure~\ref{fig:oilrig_supp} reports its performance on the Oil Rigs Inspection problem, compared with CB-MCTS and Dec-MCTS.

Consistent with previous findings, FA-MCTS achieves strong and stable performance, often yielding the best results across tested configurations. However, while FA-MCTS benefits from rapid convergence due to its faster temperature decay, the baseline CB-MCTS eventually surpasses it as the number of planning iterations grows. This contrast highlights a fundamental trade-off: FA-MCTS prioritizes stability and faster convergence by reducing exploration persistence, whereas CB-MCTS, with its slower decay search temperature, maintains greater exploratory flexibility that pays off in larger planning horizons by committing more effectively to high-reward regions of the joint search space.

These findings highlight the importance of tuning exploration dynamics when scaling to more complex multi-agent domains. Overall, they reinforce the high adaptability and robustness of our proposed algorithm as a decentralized planning solution in both smooth and sparse reward environments.

\begin{figure}[h]
    \begin{center}
        \includegraphics[width=0.625\linewidth]{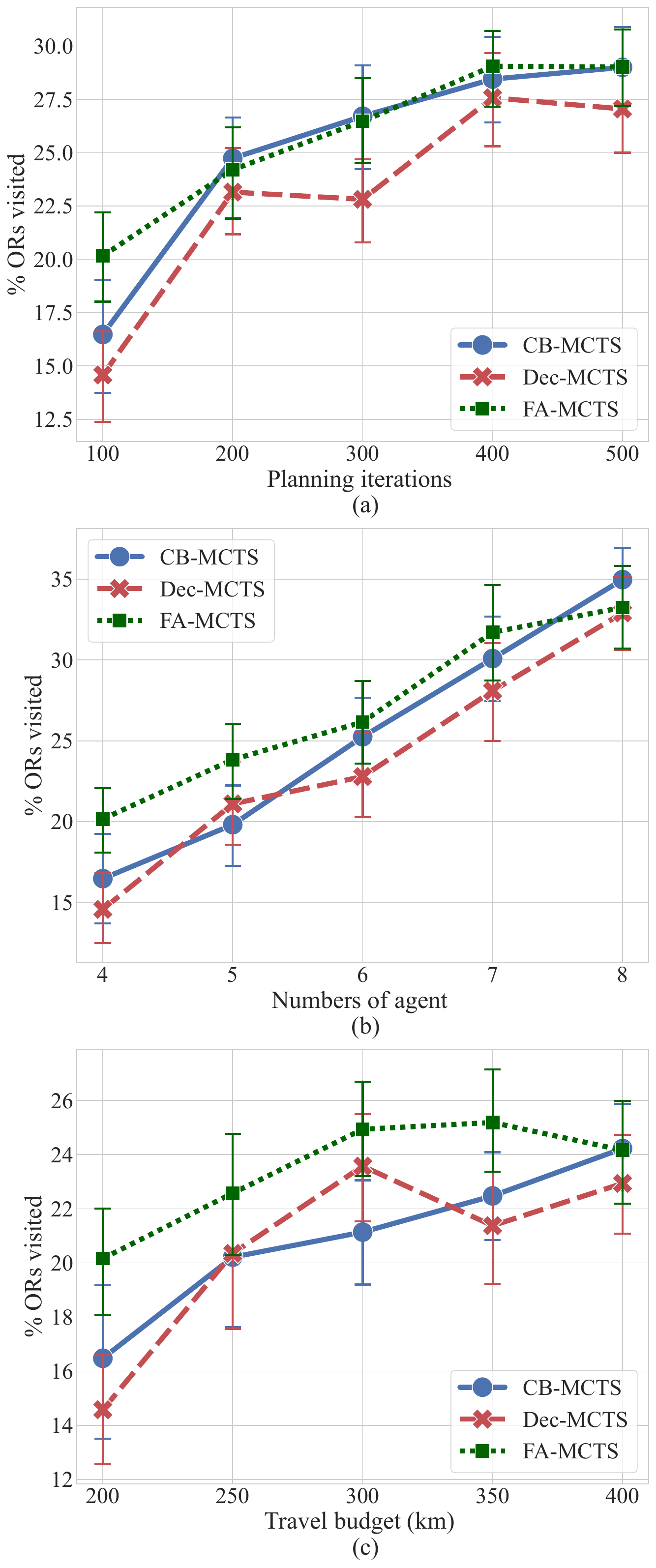}
        \caption{Impact of different system parameters on the algorithms' performances in the Oil Rigs Inspection problem: Planning iterations (a), Number of agents (b), and Travel budget (c). Results are with $95\%$ confidence interval.}
        \label{fig:oilrig_supp}
    \end{center}
\end{figure}

\end{document}